\documentclass[letterpaper]{sig-alternate-2013}
\usepackage{booktabs} 
\usepackage{color}
\usepackage{graphicx}
\usepackage{balance}  
\usepackage{amsmath}
\usepackage{txfonts}
\usepackage{amssymb}
\usepackage{indentfirst}
\usepackage{times}
\usepackage{graphicx}
\usepackage{epsfig,tabularx,amssymb,amsmath,subfigure,multirow}
\usepackage[linesnumbered,ruled,noend]{algorithm2e}
\usepackage[noend]{algorithmic}
\usepackage{multirow}
\usepackage{graphicx,floatrow}
\usepackage{listings}
\usepackage{threeparttable}
\usepackage{tikz}
\usepackage[T1]{fontenc}
\usepackage{pgfplots}
\usepackage{paralist}
\usetikzlibrary{shapes.geometric, arrows}
\usepackage{float}
\usepackage{subfigure}
\usepackage{amsmath}
\usepackage{array}
\usepackage{url}
\usepackage[ruled]{algorithm2e}

\newtheorem{thm}{Theorem}
\newtheorem{defn}[thm]{Definition}
\newtheorem{prop}[thm]{Proposition}

\newcommand{\nop}[1]{}

\usepackage{floatrow}
\floatsetup[table]{capposition=top}
\newfloatcommand{capbtabbox}{table}[][\FBwidth]

\begin{document}

\title{gSMat: A Scalable Sparse Matrix-based Join \\for SPARQL Query Processing}
\numberofauthors{8}
\author{
	\alignauthor
	Xiaowang Zhang\\
	\affaddr{School of Computer Science and Technology}\\
	\affaddr{Tianjin University}\\
	\affaddr{Tianjin, China}\\
	\alignauthor
	Mingyue Zhang\\
	\affaddr{School of Computer Science and Technology}\\
	\affaddr{Tianjin University}\\
	\affaddr{Tianjin, China}\\
	\alignauthor
	Peng Peng\\
	\affaddr{College of Computer Science and Electronic Engineering}\\
	\affaddr{Hunan University}\\
	\affaddr{Changsha, China}\\
	\and
	\alignauthor Jiaming Song\\
	\affaddr{School of Computer Science and Technology}\\
	\affaddr{Tianjin University}\\
	\affaddr{Tianjin, China}\\
	\alignauthor Zhiyong Feng\\
	\affaddr{School of Computer Software}\\
	\affaddr{Tianjin University}\\
	\affaddr{Tianjin, China}\\
	\alignauthor Lei Zou\\
	\affaddr{Institute of Computer Science and Technology}\\
	\affaddr{Peking University}\\
	\affaddr{Beijing, China}\\
}

\maketitle
\begin{abstract}
Resource Description Framework (RDF) has been widely used to represent \emph{information} on the web, while SPARQL is a standard query language to manipulate RDF data. Given a SPARQL query, there often exist many joins which are the bottlenecks of efficiency of query processing. Besides, the real RDF datasets often reveal strong \emph{data sparsity}, which indicates that a resource often only relates to a few resources even the number of total resources is large. In this paper, we propose a sparse matrix-based (SM-based) SPARQL query processing approach over RDF datasets which considers both join optimization and data sparsity.
Firstly, we present a SM-based storage for RDF datasets to lift the storage efficiency, where valid edges are stored only, and then introduce a predicate-based hash index on the storage.
Secondly, we develop a scalable SM-based join algorithm for SPARQL query processing.
Finally, we analyze the overall cost by accumulating all intermediate results and design a query plan generated algorithm.
Besides, we extend our SM-based join algorithm on GPU for parallelizing SPARQL query processing.
We have evaluated our approach compared with the state-of-the-art RDF engines over benchmark RDF datasets and the experimental results show that our proposal can significantly improve SPARQL query processing with high scalability.
\end{abstract}
\section{Introduction}\label{sec:intro}
Resource Description Framework (RDF)~\cite{RDF} is a popular data model for information on the Web of the form a triple: (\textit{subject}, \textit{predicate}, \textit{object}). An RDF dataset can also be described as a directed labeled graph, where subjects and objects are vertices and triples are edges with labels (predicates). SPARQL~\cite{sparql1.0}, as the standard query language for RDF data, is officially recommended by W3C in 2008 and its latest version SPARQL~1.1~\cite{sparql1.1} is recommended in 2013.

There are many existing approaches to evaluate SPARQL queries over RDF data, which can be roughly classified into two kinds of storage strategies: relation-based storing and graph-based storing. The former stores an RDF triple as a tuple in a ternary relation (e.g., RDF3X~\cite{rdf3x1}, Hexastore~\cite{Hexastore}, Jena2~\cite{Jena2}, and BitMat~\cite{BitMat}) and the latter stores RDF data as a directed labeled graph (e.g., gStore~\cite{gStore1}). While existing approaches improve the performance of query evaluation to a certain extent. However, their improvements are still limited in processing large-scale RDF data from a real world such as DBpedia and YAGO due to little consideration of an important feature called `` sparsity '' of those practical RDF data.

The sparsity of RDF data means that the neighbors of each vertex in an RDF graph take a quite small proportion of the whole vertices. In fact, the sparsity of RDF data exists everywhere. For instance, there are over 99.41\% nodes with at most 43 degrees (sum of out-degrees and in-degrees) in DBpedia (42966066 nodes in total, see Figure \ref{fig:DBpediaSparsityExp}) and over 95.17\% nodes with at most 39 degrees in YAGO (38734252 nodes in total, see Figure \ref{fig:YAGOSparsityExp}).

\begin{figure}[H]
	\subfigure[DBpedia Dataset]{
		\begin{minipage}[h]{0.45\linewidth}
			\centering
			\scalebox{0.45}{
				\begin{tikzpicture}
				\begin{axis}[
				symbolic x coords={43,430,4297,42967,429661,4296607,42966066},
				enlarge x limits=0.15,
				xtick = data,
				ybar,
				ymin=0,ymax=100,
				bar width=13pt,
				ylabel={The Percentage of Subjects (percent)},
				xlabel={The Number of Relations},
				ylabel near ticks,
				xlabel near ticks,
				x tick label style={rotate=45,anchor=east},
				nodes near coords,
				nodes near coords align={rotate=45,anchor=west},
				]
				
				\addplot+
				[color=cyan, fill=cyan]
				coordinates {
					(43,99.4082) (430,0.5402) (4297,0.0435) (42967,0.0078) (429661,0.0002) (4296607,0) (42966066,0)
				};
				
				\end{axis}
				\end{tikzpicture}
			}
		\end{minipage}
		\label{fig:DBpediaSparsityExp}%
	}
	\subfigure[YAGO Dataset]{
		\begin{minipage}[h]{0.45\linewidth}
			\centering
			\scalebox{0.45}{
				\begin{tikzpicture}
				\begin{axis}[
				symbolic x coords={0,39,388,3874,38735,387343,3873426,38734252},
				enlarge x limits=0.15,
				xtick = data,
				ybar,
				ymin=0,ymax=100,
				bar width=13pt,
				ylabel={The Percentage of Subjects (percent)},
				xlabel={The Number of Relations},
				ylabel near ticks,
				xlabel near ticks,
				x tick label style={rotate=45,anchor=east},
				nodes near coords,
				nodes near coords align={rotate=45,anchor=west},
				]
				
				\addplot
				[color=cyan, fill=cyan]
				coordinates {
					(39,95.1664) (388,4.6248) (3874,0.2010) (38735,0.0074) (387343,0.0004) (3873426,0) (38734252,0)
				};
				\end{axis}
				\end{tikzpicture}
			}
		\end{minipage}
		\label{fig:YAGOSparsityExp}%
	}
	\vspace*{-0.5cm}
	\caption{Data Sparsity in YAGO and DBpedia datasets}
	\label{fig:SparsityExp}%
\end{figure}
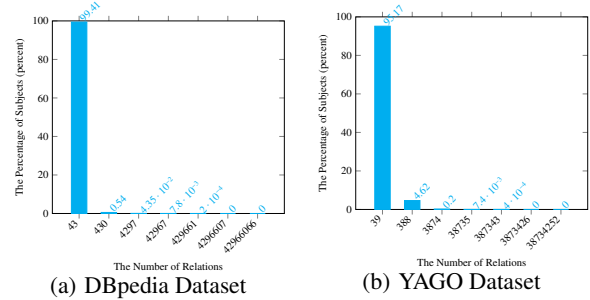

SPARQL is built on basic graph pattern (BGP) and SPARQL algebra operators. The join for concatenating variables is the core operation of SPARQL query evaluation since BGP is the join of triple patterns. In an RDF graph, the join (concatenation) of two vertices can be computed by the product of their adjacent matrices. For instance, BitMat~\cite{BitMat} employs matrices to compute join of SPARQL over RDF data. This matrix-based computation of join can be completely parallelized since each element is computed separately in the product of matrices. Taking this hyper-parallelized computing method, matrix-based join can efficiently support query evaluation over large-scale data better. Besides, the matrix-based join is friendly to support new parallelizing computing units such as Graphics Processing Units (GPUs).

In this paper, we propose a sparse matrix-based approach (named gSMat) to improve SPARQL join query evaluation over large-scale RDF data. The RDF data represented by graph is classified according to the relationships of edges, and then the data represented by each relation is stored in the form of a sparse matrix. We transform the problem of SPARQL query processing to the multiplication operations of multiple sparse matrices, and each triple pattern corresponds to a sparse matrix in the SPARQL query. A \emph{sparse matrix} is a kind of matrix which consists of large numbers of zero values with very few dispersed non-zero ones. Comparing to general matrices, sparse matrices characterize the sparsity of RDF data well.
The experimental results show that our proposal can significantly improve BGP query evaluation, especially over practical RDF data with 500 million triples.
For instance, over the benchmark datasets and queries, our gSMat has speedup up to 114 w.r.t. gStore and speedup up to 11.3 w.r.t. RDF-3X. Moreover, gSMat with GPU has speedup up to 128.8 w.r.t. gStore and speedup up to 16.13 w.r.t. RDF-3X. And over the real datasets and queries, our gSMat has speedup up to 46.4 w.r.t. gStore and speedup up to 33.6 w.r.t. RDF-3X. Moreover, gSMat with GPU has speedup up to 78.9 w.r.t. gStore and speedup up to 53.0 w.r.t. RDF-3X. Specifically, our major contributions are summarized in the followings:
\begin{itemize}
	\item We propose a sparse matrix-based storage for exactly characterizing the sparsity of RDF data, where a predicate corresponds to a sparse matrix. The sparse matrix-based storage provides a compact and efficient method to store RDF data physically and also supports high-performance algorithm of query execution.
	\item We design a query plan generated algorithm and analyze the overall cost by accumulating all intermediate results during runtime.
	It is based on the statistics of sparse matrices to reduce the size (number) of intermediate results so that we can generate an optimal execution plan for a join query.
	\item We develop a sparse matrix-based SPARQL join algorithm and its GPU-based parallel extended algorithm. Moreover, we optimize our GPU-based algorithm in some aspects such as data transferring, thread scheduling, shared and global memory management.
\end{itemize}

The rest of this paper is organized as follows: the next section introduces the related works. Section~\ref{sec:Preliminaries} briefly introduces RDF, SPARQL, and GPU computing. Section~\ref{sec:overview} presents the overview of framework of our proposal. Section~\ref{sec:storage} presents the sparse matrix-based storage and Section~\ref{sec:plan} presents the sparse matrix-based processing, respectively. Section~\ref{sec:GPU} introduces the extension of our proposal on GPU. Section~\ref{sec:ex} discusses experiments and evaluations. Finally, Section \ref{sec:conclusion} summarizes this paper.

\section{Related Work}\label{sec:related}
In this section, we review recent single-machine RDF databases, which we believe are most related to gSMat, and summarize the main differences to our engine. And we discuss some of the work done on the GPU query.

\subsection{SPARQL Query Processing}
Existing approaches to evaluate SPARQL queries over
RDF data can be classified into two categories:
relation-based systems and graph-based systems.

The relation-based systems apply a relational approach to store and index RDF data. Several systems, such as Jena~\cite{Jena1,Jena2}, Oracle~\cite{Oracle}, Sesame~\cite{Sesame2}, 3store~\cite{3store} and SOR~\cite{SOR} use a giant table to maintain triples, with three columns corresponding to subject, predicate and object, respectively. A SPARQL query will be transformed into a SQL query, and evaluated through multiple self-joins of the table. However, a significant amount of self-joins in relational database result in a long answering time, which is a potential bottleneck of the systems.

Additionally, several relation-based systems such as Hexastore~\cite{Hexastore}, RDF-3x~\cite{rdf3x1,rdf3x2,rdf3x3}, BitMat~\cite{BitMat} and TripleBit~\cite{triplebit}, employ specialized optimization techniques based on the features of RDF data and SPARQL queries. Hexastore~\cite{Hexastore} and RDF-3x~\cite{rdf3x1,rdf3x2,rdf3x3} build a set of indices that cover all possible permutations of S, P and O, in order to speed up the joins. TripleBit~\cite{triplebit} uses a two-dimension matrix to represent RDF triples, with subjects and objects as row and predicates as column. Then, it uses '1' to label the relation, otherwise uses '0'. Thus, there are only two '1' in each column, which are easy to be recorded. Furthermore, the triple matrix is divided into submatrices by the same properties and stored by column. Moreover, BitMat~\cite{BitMat} numbers every elements of RDF triples and builds bitmap indexes to collect the candidates of queries. Similar to property table, TripleBit and BitMat suffer from the large waste of space. Instead, we save the storage space and facilitate the processing in a way that represents the data as a sparse matrix and only maintains the actual relations in RDF graph.

Recently, a number of graph-based approaches were proposed to store RDF triples in graph models, such as gStore~\cite{gStore1,gStore2}, dipLODocus$_{[RDF]}$~\cite{dipLOD}, Turbo$_{HOM++}$ and AMBER~\cite{AMBER}. These graph-based approaches typically see SPARQL query processing as subgraph matching, which help reserve and query semantic information. gStore~\cite{gStore1,gStore2} maps all predicates and predicate values to binary bit strings which are then organized as a VS*-tree. Since every layer of VS*-tree is a summary of the whole RDF graph, gStore has capacity to process SPARQL query efficiently. dipLODocus$_{[RDF]}$ starts by a mixed storage considering both graph structure of RDF data and requirement of data analysis, in order to find molecule clusters and help accelerate queries through clustering related data.
Turbo$_{HOM++}$
~\cite{turbohom} develops TurboISO~\cite{turboiso} by transforming RDF graphs into normal data graphs while AMBER~\cite{AMBER} represents RDF data and SPARQL query as multigraphs.

However, all the above systems are computationally expensive in preprocessing due to their lack of parallelism. In addition, query methods with Time-consuming traversal add run time overhead. Instead, focused on the sparsity of real RDF datasets, our approach generates sparse matrices and allows subqueries to run parallel, thus has the crucial advantage that it can be adapted to large-scale datasets.

\subsection{Query processing based on GPU}
We now briefly survey the techniques that use GPUs to improve the performance of database operations.  Current database researches identify the computational power of GPUs as a way to increase the performance of database systems. Since GPU algorithms are not necessarily faster than their CPU counterparts, it is important to use the GPU only if it is beneficial for query processing.
S.Breß et al.~\cite{GPU1}~\cite{GPU2} extend CPU/GPU scheduling framework to support hybrid query processing in database systems. ~\cite{GPU4} focuses on accelerating SELECT queries and describes the considerations in an efficient GPU implementation.
~\cite{GPU5} accelerates the search in big RDF data by exploiting modern multi-core architectures based on GPU.
In~\cite{GPU3}, a new efficient and scalable index is proposed. These data structures have an edge over others in terms of their implementation as a parallel algorithm using the CUDA (Compute Unified Device Architecture) framework. ~\cite{GPU6} shows that usage of graphic card for intensive calculations over large knowledge bases in RDF format can be another way to decrease computational time.

We focus on GPU-based algorithms for the SPARQL join operation, which is a core operator in SPARQL query. Moreover, our algorithms are based on a multi-core SIMD(Single Instruction Multiple Data) architecture model of the GPU, and thus can be applied to CPUs of a similar architecture.

\section{Preliminaries} \label{sec:Preliminaries}
In this section, we briefly recall RDF and SPARQL in~\cite{RDF,RDF1.1,sparql1.0,sparql1.1}.

\subsection{RDF}
Let $U$ be an infinite countable set of constants. Let $U_s$, $U_{p}$, and,  $U_{o}$ be three subsets of $U$.  An RDF triple is of the form (\textit{subject}, \textit{predicate}, \textit{object}) (for short, $(s, p, o)$) where \textit{subject} $\in U_{s}$ denotes an entity or a class of resources; \textit{predicate} $\in U_{p}$ denotes attributes or relationships between entities or classes; and \textit{object} $\in U_{o}$ is an entity, a class of resources, or a literal value.  An \textit{RDF dataset} $D$ is a set of RDF triples.  An RDF dataset represents a labeled directed graph where \textit{predicate} is taken as a label~\cite{gutierrez_survey}, so also called \textit{RDF graph}. In this sense, we assume that $U_{p} \cap (U_{s} \cup U_{o}) = \emptyset$, that is, predicates are never entities or classes.

Let $N_s$, $N_p$, and $N_o$ denote the set of subjects, predicates, and objects, respectively. Let $D$ be an RDF dataset. Let $N_s(D)$, $N_p(D)$, and $N_o(D)$ denote the set of subjects, predicates, and objects occurring in $D$, respectively. Formally, we define $N_s(D)$, $N_p(D)$, and $N_o(D)$ as follows:
\begin{itemize}
	\item $N_s(D):= \{u \mid \exists\,v \in U_{p}, w \in U_{o}, ~(u, v, w) \in D\}$;
	\item $N_p(D):= \{v \mid \exists\,u \in U_{s}, w \in U_{o}, ~(u, v, w) \in D\}$;
	\item $N_o(D):= \{w \mid \exists\,u \in U_{s}, v \in U_{p}, ~(u, v, w) \in D\}$.
\end{itemize}

For instance, an RDF dataset $D$ is shown in the right side of Figure \ref{fig:rdf}, which is taken as a directed graph shown in the left side of Figure \ref{fig:rdf}.

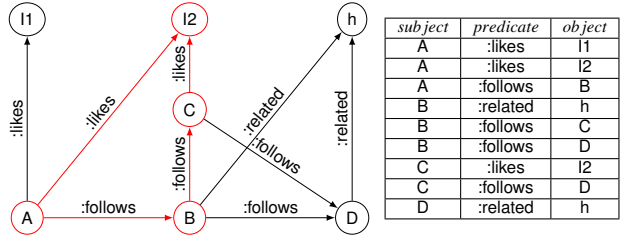
\begin{figure}[H]
	\begin{minipage}{0.4\linewidth}
		\small
		\scalebox{0.8}{
			\begin{tikzpicture}[
			x=0.3cm,
			y=0.3cm,
			font=\sffamily,
			v/.style={shape=circle,draw,inner sep=1pt,minimum size=2mm},
			to/.style={->,>=latex},
			p/.style={inner sep=2pt,above,sloped} ]
			
			\node[shape=ellipse,text centered, draw=red] (v1) at (0,1) {A};
			\node[shape=ellipse,text centered, draw=red] (v2) at (9,1) {B};
			\node[shape=ellipse,text centered, draw=red] (v3) at (9,7) {C};
			\node[shape=ellipse,text centered, draw=black] (v4) at (18,1) {D};
			\node[shape=ellipse,text centered, draw=black] (v5) at (0,12) {I1};
			\node[shape=ellipse,text centered, draw=red] (v6) at (9,12) {I2};
			\node[shape=ellipse,text centered, draw=black] (v7) at (18,12) {h};
			\draw[to,draw=red] (v1) to node[p,above] {:follows} (v2);
			\draw[to,draw=red] (v2) to node[p,above] {:follows} (v3);
			\draw[to] (v2) to node[p] {:follows} (v4);
			\draw[to] (v3) to node[p] {:follows} (v4);
			\draw[to] (v1) to node[p] {:likes} (v5);
			\draw[to,draw=red] (v1) to node[p] {:likes} (v6);
			\draw[to] (v2) to node[p] {:related} (v7);
			\draw[to,draw=red] (v3) to node[p] {:likes} (v6);
			\draw[to] (v4) to node[p] {:related} (v7);
			\end{tikzpicture}
		}
	\end{minipage}
	\hspace*{1.5cm}
	\begin{minipage}{0.4\linewidth}
		\small
		\scalebox{0.8}{
			\begin{tabular}{|c|c|c|}
				\hline
				\textbf{$subject$} &\textbf{$predicate$}& \textbf{$object$} \\ \hline
				\textsf{A} &\textsf{:likes}& \textsf{I1} \\\hline
				\textsf{A} &\textsf{:likes}& \textsf{I2} \\\hline
				\textsf{A} &\textsf{:follows}& \textsf{B} \\\hline
				\textsf{B} &\textsf{:related}& \textsf{h} \\\hline
				\textsf{B} &\textsf{:follows}& \textsf{C} \\\hline
				\textsf{B} &\textsf{:follows}& \textsf{D} \\\hline
				\textsf{C} &\textsf{:likes}& \textsf{I2} \\\hline
				\textsf{C} &\textsf{:follows}& \textsf{D} \\\hline
				\textsf{D} &\textsf{:related}& \textsf{h} \\\hline
			\end{tabular}
		}
	\end{minipage}
	\caption{An Example of an RDF dataset $D_G$}\label{fig:rdf}
\end{figure}

\subsection{SPARQL}
SPARQL (named recursively, \emph{SPARQL Protocol and RDF Query Language}) is the official recommended standard for RDF query language by W3C and it defines the syntax and semantics of RDF query language \cite{SPARQLsemantics}.  SPARQL query language is based on triple patterns, so-called $tp =(s,p,o)$ and there may be variables in any position, such as: $\{?\textit{A},\, \textit{:follows},\, ?\textit{B}\}$. And a set of triple patterns forms a basic graph pattern (BGP).

\nop{
	SPARQL is a query language that retrieves information from RDF graph, which is built on previous RDF query languages (rdfDB, RDQL, and SeRQL) for accessing any data resource that can be mapped to the RDF model. A SPARQL query defines a graph pattern $P$ that is matched against an RDF graph $G$. This is done by replacing the variables in $P$ with elements of $G$ such that the resulting graph is contained in $G$ (pattern matching).
}

A common SPARQL query contains a group of BGP queries, whose conjunctive fragment allows to express the core form:
\[
\emph{Select}|\emph{Project}|\emph{Join}
\]
database queries. For instance, considering a query as follows: for the above SPARQL query, the SPARQL query is represented as a query graph (or query pattern) in Figure \ref{fig:query}.

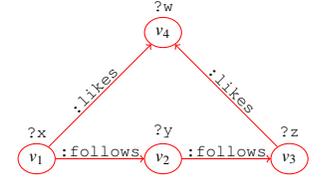
\begin{figure}[H]
	\begin{minipage}{0.4\linewidth}
		\footnotesize
		\scalebox{1}{
			\fbox{
				\parbox{3.5cm}{
					\textbf{SELECT} ?x ?y ?z ?w\\
					\textbf{WHERE} \{  \\
					\hspace*{0.4in} ?x :follows ?y~.\\
					\hspace*{0.4in} ?y :follows ?z~.\\
					\hspace*{0.4in} ?x :likes ?w~.\\
					\hspace*{0.4in} ?z :likes ?w~.\}\\
				}
			}
		}
	\end{minipage}
	\hspace*{1cm}
	\begin{minipage}{0.4\linewidth}
		\small
		\scalebox{0.8}{
			\begin{tikzpicture}[
			x=0.35cm,
			y=0.35cm,
			font=\sffamily,
			v/.style={shape=circle,draw,inner sep=1pt,minimum size=2mm},
			to/.style={->,>=latex},
			p/.style={inner sep=30pt,above,sloped} ]
			\node[shape=ellipse,text centered, draw=red] [label=above:\texttt{?x}] (v1) at (0,6){$v_1$} ;
			\node[shape=ellipse,text centered, draw=red] [label=above:\texttt{?y}] (v2) at (6,6) {$v_2$};
			\node[shape=ellipse,text centered, draw=red] [label=above:\texttt{?z}] (v3) at (12,6) {$v_3$};
			\node[shape=ellipse,text centered, draw=red] [label=above:\texttt{?w}] (v4) at (6,12) {$v_4$};
			
			\draw[->,draw=red](v1) to node[p,inner sep=1pt] {\texttt{:follows}} (v2);
			\draw[->,draw=red](v2) to node[p,inner sep=1pt]{\texttt{:follows}} (v3);
			\draw[->,draw=red](v1) to node[p,inner sep=1pt] {\texttt{:likes}} (v4);
			\draw[->,draw=red](v3) to node[p,inner sep=1pt] {\texttt{:likes}} (v4);
			\end{tikzpicture}
		}
	\end{minipage}
	\caption{An example of SPARQL query}\label{fig:query}
\end{figure}

Given a SPARQL query,
we can parse it into a query graph (structure) where each triple pattern is taken as an edge labelled by its predicate and two edges are connected by their common constant or variable. For example, we illustrate a query $Q_g$ consisting of nine triple patterns (a benchmark query in WatDiv~\cite{WatDiv}) as shown in Figure~\ref{fig:exam-Q-g} and the query graph structure of $Q_g$ after parsing is shown in the left side of Figure~\ref{fig:querygraph}.
\begin{figure}[h]
	{\small \begin{center}
			\fbox{
				\parbox{8cm}{
					\textbf{SELECT} ?v0 ?v1 ?v2 ?v4 ?v5 ?v6 ?v7 ?v8\\
					\textbf{WHERE} \{\\
					\hspace*{0.1in} ?v0 <http://xmlns.com/foaf/homepage> ?v1~.\\
					\hspace*{0.1in} ?v2	<http://purl.org/goodrelations/includes> ?v0~.\\
					\hspace*{0.1in} ?v0	<http://ogp.me/ns\#tag>	?v3~.\\
					\hspace*{0.1in} ?v0	<http://schema.org/description>	?v4~.\\
					\hspace*{0.1in} ?v0	<http://schema.org/contentSize>	?v8~.\\
					\hspace*{0.1in} ?v1	<http://schema.org/url>	?v5~.\\
					\hspace*{0.1in} ?v1	<http://db.uwaterloo.ca/\#galuc/wsdbm/hits>	?v6~.\\
					\hspace*{0.1in} ?v1	<http://schema.org/language>	?v9~.\\
					\hspace*{0.1in} ?v7	<http://db.uwaterloo.ca/\#galuc/wsdbm/likes>	?v0~.\\
					\hspace*{0.1in}\}
				}
			}
	\end{center}}
	\caption{A running example of query $Q_{g}$}\label{fig:exam-Q-g}
\end{figure}

\begin{figure}[H]
	\subfigure
	{
		\begin{minipage}[htbp]{1\linewidth}
			\centering
			\scalebox{1.1}{
				\includegraphics[width=8cm]{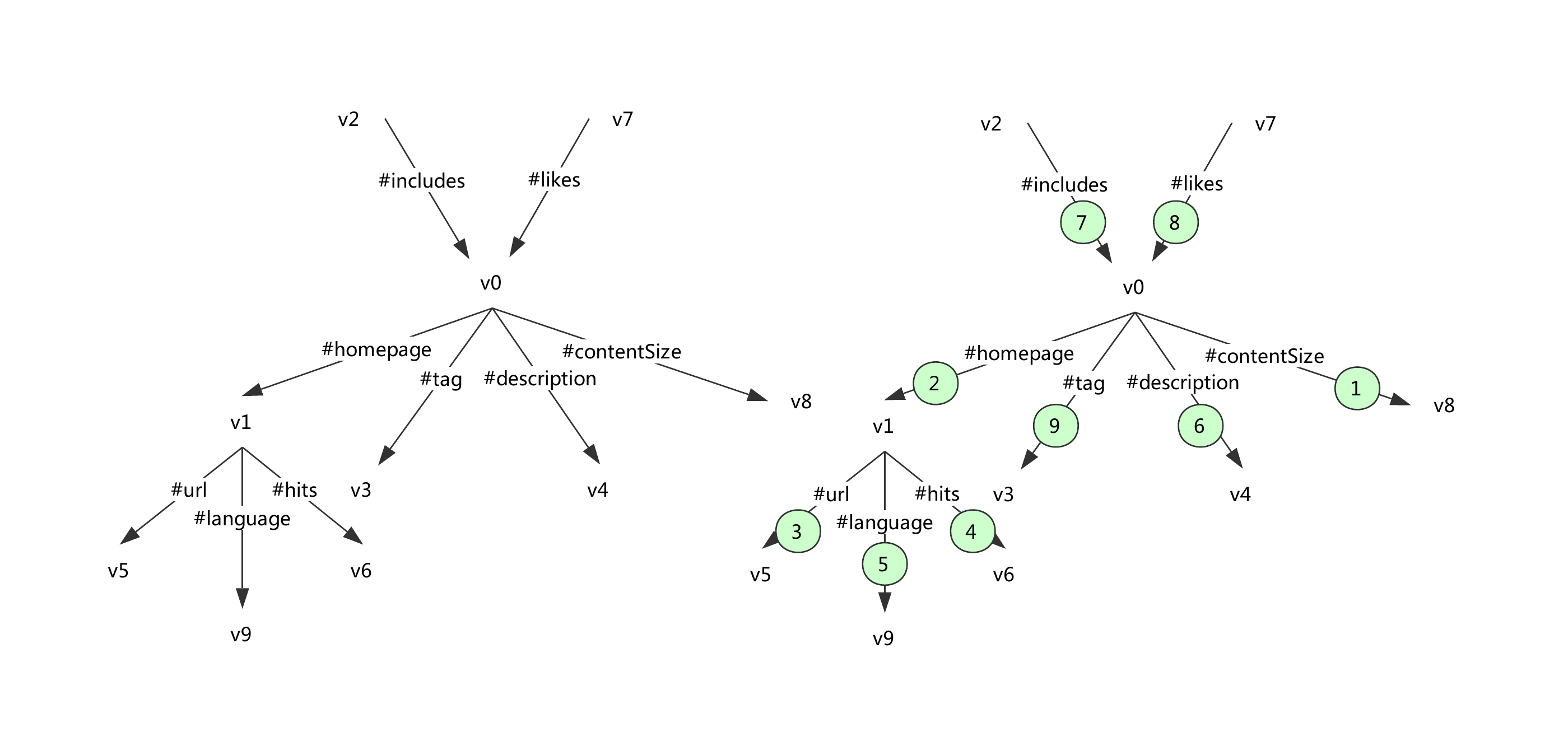}
			}
		\end{minipage}
	}
	\caption{The query graph structure of $Q_g$} \label{fig:querygraph}
\end{figure}

In addition, each triple pattern can be mapped to a sparse matrix. According to the predicate of each triple pattern, that is, the edge relationship, we can index sparse matrix representing this relationship.
A SPARQL query processing problem is about join operation of multiple triple patterns, and now it is the join operation of multiple sparse matrices that can be transformed into a basic multiplication operation of sparse matrices.
In other words, we transfrom the SPARQL query problem into a sparse matrix-based multiplication.

\section{Overview of \lowercase{g}SM\lowercase{at}}\label{sec:overview}
In this section, we give an overview of gSMat (i.e., \textit{g}raph \textit{S}parse \textit{Mat}rix-based SPARQL query processing). The framework of gSMat is shown in Figure~\ref{fig:architecture}.
\begin{figure}[H]
	\subfigure
	{
		\begin{minipage}[htbp]{1\linewidth}
			\centering
			\scalebox{1}{
				\includegraphics[width=8cm]{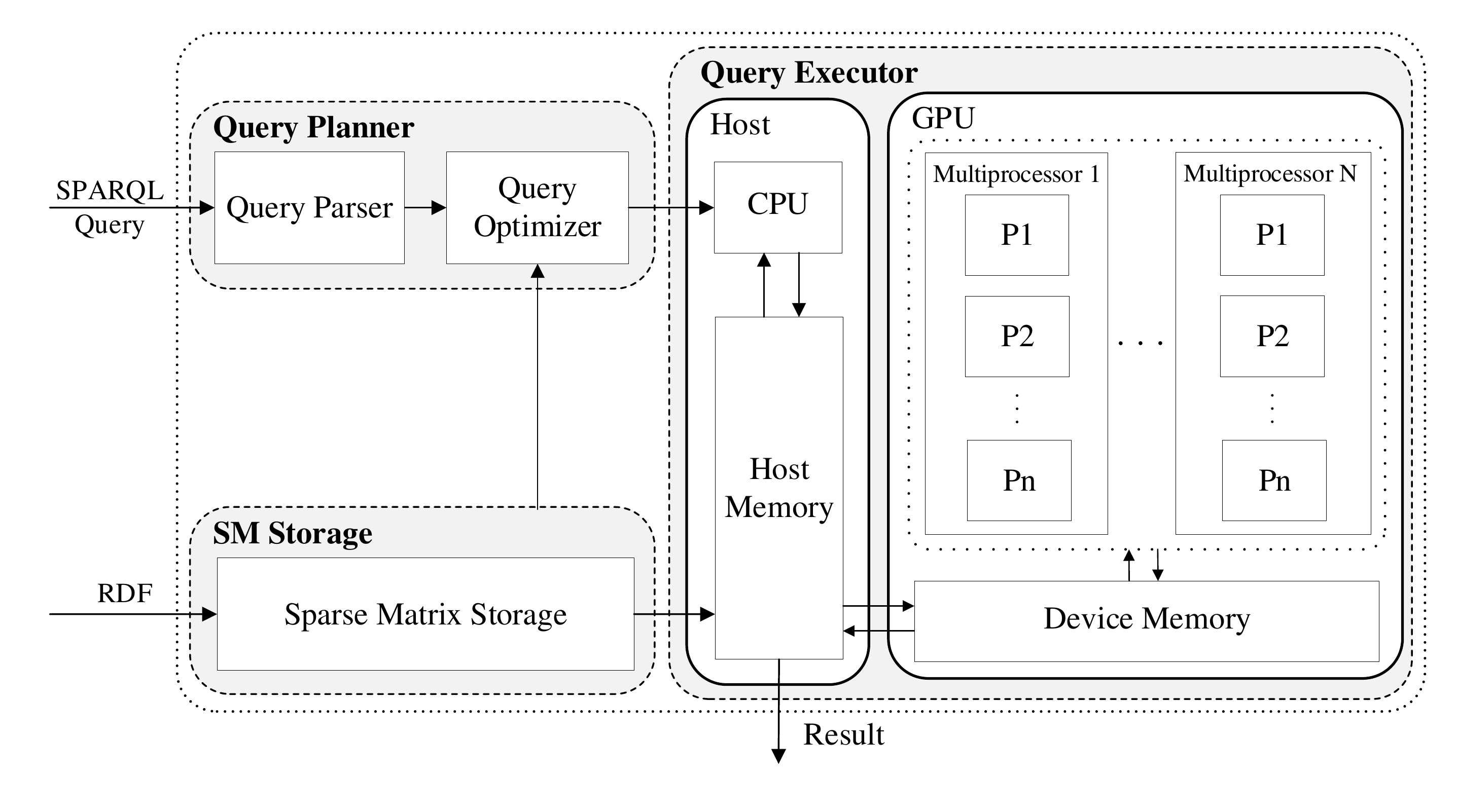}
			}
		\end{minipage}
	}
	\caption{The framework of gSMat} \label{fig:architecture}
\end{figure}

\begin{figure*}[htbp]
	\centering
	\includegraphics[width=1\textwidth]{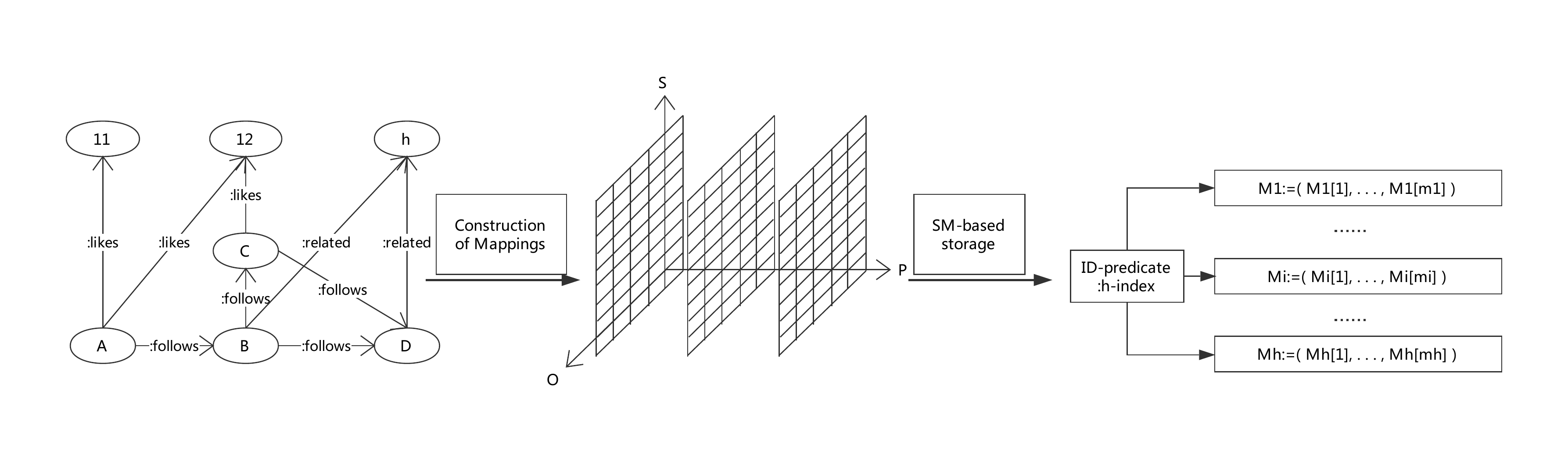}
	\caption{The procedure of SM-based storage of an RDF dataset}\label{fig:smstore}
\end{figure*}

The whole framework of gSMat contains three modules, namely, \emph{SM Storage}, \emph{Query Planner}, and \emph{Query Executor}.
\begin{description}
	\item[SM Storage] This module maintains a collection of SM-based tables indexed by predicates and the statistics of predicates from RDF dataset. The sparse matrix-based storage provides a compact and efficient method to store RDF data physically and also supports high-performance algorithm of query execution (discussed in Section~\ref{sec:storage}).
	\item[Query Planner] This module contains two parts, \emph{Query Parser} and \emph{Query Optimizer}. The former parses a SPARQL query to a query graph and the latter generates the optimal query plan based on the statistics in the module of SM Storage. Here, in order to achieve the optimal query plan, we analyze the overall cost by accumulating all intermediate results during the whole of relations joining to evaluate the cost of the query (discussed in Section~\ref{sec:planer}).
	\item[Query Executor] This module contains two parts, namely, CPU computation and GPU computation. SPARQL queries can be evaluated in this module by applying two different strategies (CPU-based and GPU-based) (discussed in Section~\ref{sec:exe} and Section~\ref{sec:GPU}). Note that this module provides a switch of GPU so that we could conveniently use GPU to process queries during the query execution.
\end{description}

In the following three sections (Section~\ref{sec:storage}, Section~\ref{sec:plan}, and Section~\ref{sec:GPU}), we introduce the three modules of gSMat in details.

\section{Sparse Matrix-based Storage}\label{sec:storage}
In this section, we present SM-based storage.

\subsection{RDF Cube}
Firstly, we define the notion of \emph{RDF Cube} as a logical model of SM-based storage for RDF datasets.

\begin{defn}[Cube]\label{def:Cube}
	Let $n$ and $m$ be two positive integers. Let $N = \{1,\ldots, n\}$ and $M = \{1,\ldots,m\}$. A ternary relation $\mathcal{C} \subseteq N \times M \times M$ is an $n$-index Cube if the followings hold:
	\begin{itemize}
		\item $N = \{i \mid \exists\,j,k, s.t. (i, j, k) \in \mathcal{C}\}$;
		\item $M =\{j,k \mid \exists\,i, s.t. (i, j, k) \in \mathcal{C}\}$.
	\end{itemize}
\end{defn}

Next, we will model RDF datasets as cubes. To do so, we firstly introduce a notion of \emph{RDF Cube}.

\begin{defn}[RDF Cube]
	Let $D$ be an RDF dataset. An RDF Cube $\mathcal{C}_{D}$ is a $|N_p(D)|$-index Cube from $h_p(N_p(D))$ to $h_{so}(N_s(D) \cup N_o(D))$ defined as follows:
	\[
	\mathcal{C}_{D}:= \{(h_p(v), h_{so}(u), h_{so}(w)) \mid (u, v, w) \in D\}
	\]
	where
	\begin{itemize}
		\item $h_p$ is a bijective mapping from $N_p(D)$ to $\{1,\ldots, |N_p(D)|\}$; \item $h_{so}$ is a bijective mapping from $N_s(D) \cup N_o(D)$ to $\{1,\ldots, |N_s(D) \cup N_o(D)|\}$.
	\end{itemize}
\end{defn}

Note that $N_s(D)$, $N_p(D)$, and $N_o(D)$ are collections of all subjects, predicates, and objects occurring in $D$, respectively.

Based on RDF Cube, we present a sparse matrix-based storage in the following three steps:
\begin{itemize}
	\item Constructing two bijective mappings $h_P$ and $h_{SO}$ from the set of predicates and the union of sets of subjects and objects;
	\item Generating RDF Cube of RDF datasets;
	\item Storing RDF Cube as sets of sparse matrix via index of predicates. And recording the statistics of each sparse matrix at the same time.
\end{itemize}

Next, we introduce how to design a sparse matrix-based storage of an RDF dataset.

\subsection{Generation of RDF Cubes}
In this step, we encode strings of raw data into postive integers so that we can construct two bijective mappings $h_p$ and $h_{so}$. The goal of encoding is to reduce space usage and disk {I/O}. There are many methods to encode, such as Hash.

Given an RDF dataset $D$, the encoding process encodes all subjects, predicates and objects by the order of occurrance of triples in $D$. Finally, this process terminates till all subjects, predicates, and objects encoded completely.

For instance, consider the RDF dataset $D_G$ in Figure~\ref{fig:rdf}, we have
\begin{itemize}
	\item $N_{s}(D_G) = \{\textsf{A}, \textsf{B}, \textsf{C}, \textsf{D}\}$;
	\item $N_{p}(D_G) = \{:\textsf{likes}, :\textsf{follows}, :\textsf{related}\}$;
	\item $N_{o}(D_G) = \{\textsf{l1}, \textsf{l2}, \textsf{B}, \textsf{h}, \textsf{C}, \textsf{D}\}$.
\end{itemize}

We construct $h_p$ and $h_{so}$ shown in Figure~\ref{fig:hash-rdf}.
\begin{figure}[H]
	\begin{minipage}{0.45\linewidth}
		\centering
		\begin{tabular}{|c|c|}
			\hline
			\textsf{:likes} & 1 \\\hline
			\textsf{:follows} & 2 \\\hline
			\textsf{:related} & 3 \\\hline
		\end{tabular}
	\end{minipage}
	\hspace*{-1cm}
	\begin{minipage}{0.45\linewidth}
		\centering
		\begin{tabular}{|c|c|}
			\hline
			\textsf{A} & 1 \\\hline
			\textsf{B} & 4 \\\hline
			\textsf{C} & 6 \\\hline
			\textsf{D} & 7 \\\hline
			\textsf{l1} & 2 \\\hline
			\textsf{l2} & 3\\\hline
			\textsf{h} & 5\\\hline
		\end{tabular}
	\end{minipage}
	\caption{The construction of $h_p$ and $h_{so}$ for $D_G$}\label{fig:hash-rdf}
\end{figure}

Based on mappings $h_p$ and $h_{so}$ constructed in the previous step, we generate an RDF Cube of an RDF dataset, that is, a set of triples of postive integers (as codes).

For instance, consider the RDF dataset $D_G$ in Figure~\ref{fig:rdf}, we construct $\mathcal{C}_{D_G}$ shown in Table~\ref{tab:cube-rdf}.
\begin{table}[H]
	\caption{The RDF Cube $\mathcal{C}_{D_G}$ of an RDF dataset $D_G$}\label{tab:cube-rdf}
	\centering
	\begin{tabular}{|c|c|}
		\hline
		$D_G$ & $\mathcal{C}_{D_G}$ \\ \hline
		(\textsf{A}, \textsf{:likes}, \textsf{I1}) & (1, 1, 2) \\\hline
		(\textsf{A}, \textsf{:likes}, \textsf{I2}) & (1, 1, 3) \\\hline
		(\textsf{A}, \textsf{:follows}, \textsf{B}) & (1, 2, 4) \\\hline
		(\textsf{B}, \textsf{:related}, \textsf{h}) & (4, 3, 5) \\\hline
		(\textsf{B}, \textsf{:follows}, \textsf{C}) & (4, 2, 6) \\\hline
		(\textsf{B}, \textsf{:follows}, \textsf{D}) & (4, 2, 7) \\\hline
		(\textsf{C}, \textsf{:likes}, \textsf{I2}) & (6, 1, 3) \\\hline
		(\textsf{C}, \textsf{:follows}, \textsf{D}) & (6, 2, 7)\\\hline
		(\textsf{D}, \textsf{:related}, \textsf{h}) & (7, 3, 5) \\\hline
	\end{tabular}
\end{table}

\subsection{SM-based Storage of RDF Datasets}
In this step, we store RDF Cube as sets of sparse matrices via index of predicates. Let $h$ be a postive integer.  An $h$-index \emph{spare matrix} $M_{h}$ is an $m (statistics) \times 2$ matrix whose two columns are lalelled as $s$ and $o$, respectively and the matrix is labeled by $h$.

Let $\mathcal{C}$ be an RDF Cube and $h$ be a postive integer.  We construct an $h$-index spare matrix $M_h$ as follows:
\[
M_h:= (M_h[1], \ldots, M_h[m]);
\]
where for each $i \in \{1,\ldots, m\}$, the row vector $M_h[i]$ is defined as follows: $M_h[i]:= (j_i, k_i)$ iff $(j_i, h, k_i) \in \mathcal{C}$.

For instance, consider the RDF dataset $D_G$ in Figure~\ref{fig:rdf}, we construct three spare matrixes $M_1$, $M_2$, and $M_3$ shown in Figure~\ref{fig:SM-rdf}.
\begin{figure}[H]
	\begin{minipage}{0.30\linewidth}
		\centering
		\begin{tabular}{|c|c|}
			\hline
			1 & 2 \\\hline
			1 & 3 \\\hline
			6 & 3 \\\hline
		\end{tabular}
	\end{minipage}
	\hspace*{-1cm}
	\begin{minipage}{0.30\linewidth}
		\centering
		\begin{tabular}{|c|c|}
			\hline
			1 & 4 \\\hline
			4 & 6 \\\hline
			4 & 7 \\\hline
			6 & 7 \\\hline
		\end{tabular}
	\end{minipage}
	\hspace*{-1cm}
	\begin{minipage}{0.30\linewidth}
		\centering
		\begin{tabular}{|c|c|}
			\hline
			4 & 5 \\\hline
			5 & 7 \\\hline
		\end{tabular}
	\end{minipage}
	\caption{The three spare matrixes $M_1$, $M_2$, and $M_3$ for $D_G$}\label{fig:SM-rdf}
\end{figure}

From Figure~\ref{fig:smstore}, gSMat doesn't load triples of the form (\emph{x, y, z}), but (\emph{x, z}) pairs. So we only need 9 pairs, that is 18 units (in total, $M_1$, $M_2$, and $M_3$) to store the RDF dataset $D_{G}$ via spare matrix while we need 147 (7$\ast$7$\ast$3) units via adjacent matrix. So the sparse matrices could actually store RDF datasets in a compact way.

In a short, we can summarize the following advanages of SM-based storage:
\begin{itemize}
	\item {\em High-density storage capacity}: SM-based storage is based on sparse matrices where only non-zero elements are stored, that is to say, SM-based storage is an edge-based storage.
	\item {\em High-performance join computation}:  SM-based storage provides a high-performance join which is a multiplication of sparse matrices since multiplication of sparse matrices can mainly concern available elements in sparse matrices.
	\item {\em High parallelizability}:  SM-based storage supports  a complete parallel method since the multiplication of sparse matrices can be computed in a completely parallel way.
\end{itemize}

\section{Query Processing}\label{sec:plan}
In this section, we first introduce that sparse matrix multiplication and join operations can be transformed into each other. Then the SPARQL query graph is converted into a series of sparse matrix multiplications. Again, the basis for the converting of the query graph is given. Finally, the specific implementation of the sparse matrix-based join algorithm is described in detail.

Let $R$ be a relation name and $\mathrm{Sch}(R)$ be the schema of $R$.
If $\mathrm{Sch}(R) = (?x_1,\ldots, ?x_n)$ then we simply denote $R(?x_1,\ldots,?x_n)$ as a $n$-ary relation with the schema of $(?x_1,\ldots,?x_n)$.
The procedure of SM-based BGP query evaluation is generally summarized in the following steps: given a BGP query $Q$,
\begin{enumerate}
	\item Initializing: for every triple pattern $t$ (without loss of generality, we assume that $t$ is of the form $(?x, q, ?y)$) in $Q$, we introduce a binary relation $R_q(?x, ?y)$ and then initialize $R_q(?x, ?y)$ as a $h_p(q)$-index sparse matrix;
	\item Joining: for two relations $R_1(?x_1,\ldots,?x_n)$ and $R_2(?y_1,\ldots,?y_m)$, we introduce a $k$-ary relation $R(?z_1,\ldots,?z_k)$ with $(?x_1,\ldots,?x_n)$ $\cup (?y_1,\ldots,?y_m) = (?z_1,\ldots,?z_k)$ (where $\cup$ is the distinct union) and $R = R_1 \Join R_2$ where $\Join$ is the concatenation (i.e., Join) of relations.
	\item Returning: we output the final relation till all binary relations are concatenated.
\end{enumerate}

\subsection{SM-based Join of Two Triple Patterns}

In this section, we first introduce the multiplication of sparse matrix-matrix.
For the matrix \emph{A}, \emph{a$_{ij}$} represents the item of the \emph{i}th row and the \emph{j}th column of the \emph{A} matrix and \emph{a$_{i*}$} denote the vector consisting of the \emph{i}th row of \emph{A}. Similarly, \emph{a$_{*j}$} represents the vector of the \emph{j}th column of \emph{A}. The operation multiplies a matrix \emph{A} of size $\emph{m}  \times  \emph{v}$ with a matrix \emph{B} of size \emph{v} $\times$ \emph{n} and gives a result matrix \emph{C} of size \emph{m} $\times$ \emph{n}.
In the matrix-matrix multiplication, the \emph{$c_{ij}$} can be defined by
\begin{center}
	\emph{$c_{ij}$}=(\emph{$a_{i*}$} $\cdot$ \emph{$b_{*j}$})
\end{center}
and the \emph{i}th row of the result matrix \emph{C} can be defined by
\begin{center}
	\emph{$c_{i*}$}=(\emph{$a_{i*}$}$\cdot$\emph{$b_{*1}$},$a_{i*}$$\cdot$\emph{$b_{*2}$},$\cdots$,$a_{i*}$$\cdot$\emph{$b_{*j}$},$\cdots$,$a_{i*}$$\cdot$\emph{$b_{*n}$}),
\end{center}
where the operation $\cdot$ is dot product of the two vectors.

\begin{table}[H]
	\begin{center}
		\scalebox{0.85}{
			\begin{floatrow}
				\begin{minipage}{0.2\linewidth}
					\capbtabbox{
						\begin{tabular}{ccc}
							\hline
							row & col &value\\
							\hline
							i & k &1  \\
							i & l &1\\
							\hline
						\end{tabular}
						
					}{
						\caption{A}
						\label{tab:tbA}
					}
				\end{minipage}
				\begin{minipage}{0.2\linewidth}
					\capbtabbox{
						\begin{tabular}{ccc}
							\hline
							row & col&value \\
							\hline
							k & r &1 \\
							k & t &1\\
							l & s &1\\
							l & t &1\\
							\hline
						\end{tabular}
					}{
						\caption{B}
						\label{tab:tbB}
					}
				\end{minipage}
				\begin{minipage}{0.2\linewidth}	
					\capbtabbox{
						\begin{tabular}{ccc}
							\hline
							row & col&value \\
							\hline
							i & r &1 \\
							i & s &1\\
							i & t &2\\
							\hline
						\end{tabular}
					}{
						\caption{C}
						\label{tab:tbC}
					}
				\end{minipage}	
			\end{floatrow}
		}
	\end{center}
\end{table}
Now, we take sparsity of the matrices \emph{A}, \emph{B} and \emph{C} into consideration and describe the multiplication of sparse matrices that only store non-zero items in the form of triples.
We first consider the sparsity of the \emph{A} matrix. Without loss of generality, we assume that the \emph{i}th row of the \emph{A} matrix has only two non-zero entities, respectively in the \emph{k}th and \emph{l}th columns. So, \emph{$a_{i*}$} becomes (\emph{$a_{ik}$},\emph{$a_{il}$}). Since the matrix \emph{B} is sparse as well, again without loss of generality, we assume that the \emph{k}th row of \emph{B} has only two non-zero entries in the \emph{r}th and the \emph{t}th column, and the \emph{l}th row of \emph{B} also has only two non-zero entries in the \emph{s}th and the \emph{t}th column. So the two rows are given by
\emph{$b_{k*}$}=(\emph{$b_{kr}$},\emph{$b_{kt}$}) and \emph{$b_{l*}$}=(\emph{$b_{ls}$},\emph{$b_{lt}$}).
In the process of sparse matrix multiplication, the invalid operation of element 0 is avoided because as long as one of the two elements is 0, the product is also 0. Since the other entities are all zero values, we do not have to explicitly record them, and in the calculation of the vector dot product of the \emph{i}th row of the \emph{C} matrix, the zero entities are directly negligible. So, the \emph{i}th row of the result matrix \emph{C} can also be defined by
\begin{center}
	\emph{$c_{i*}$}=\emph{$a_{ik}$}(\emph{$b_{kr}$},\emph{$b_{kt}$})+\emph{$a_{il}$}(\emph{$b_{ls}$},\emph{$b_{lt}$}).
\end{center}

Now we write \emph{A} and \emph{B} as the triples needed for sparse matrix multiplication.
As shown in Table\ref{tab:tbA}, the triplet form of the sparse matrix of \emph{$a_{i*}$} is described. Similarly, Table \ref{tab:tbB} describes \emph{$b_{k*}$} and \emph{$b_{l*}$}.
In order to find the value of the result matrix \emph{C}, we only need to find the corresponding pairs of elements in \emph{A} and \emph{B} (that is, the $col$ value in \emph{A} and the $row$ value in \emph{B} equal to each other). Thus, in order to obtain a non-zero product, as long as the corresponding non-zero element in \emph{B} is found by multiplying each non-zero element in \emph{A}.

For example, for the \emph{$c_{i*}$} of result matrix \emph{C}, it depends only on \emph{$a_{i*}$} of \emph{A}, and \emph{$a_{i*}$}=(\emph{$a_{ik}$},\emph{$a_{il}$}) is related to \emph{$b_{k*}$} and \emph{$b_{l*}$} of \emph{B} matrix, because the $col$ of \emph{A} corresponds to the $row$ of \emph{B} in the matrix multiplication operation. Traversing all elements of \emph{$a_{i*}$} and finding elements corresponding to \emph{$a_{i*}$} in \emph{B} and multiply their $value$. The first element of \emph{A} is (\emph{i,k},1) and its associated elements are (\emph{k,r},1) and (\emph{k,t},1) in \emph{B}.
To make it easier to understand, let's take the results of these products as (\emph{i,r},1) and (\emph{i,t},1).
The second element of \emph{A} is (\emph{i,l},1) and its associated elements are (\emph{l,s},1) and (\emph{l,t},1) in \emph{B}. its results are (\emph{i,s},1) and (\emph{i,t},1).

Since the value of each element of the \emph{C} matrix is a cumulative value, the result of each multiplication is only a partial value of an element of the \emph{C} matrix. And then the results of these products are cumulatively added according to the $col$ of the \emph{B} matrix. So, two (\emph{i,t},1)  become (\emph{i,t},2) after accumulation.
Because the matrix \emph{C} is also sparse and the \emph{i}th row of \emph{C} only has three nonzero entries in the \emph{r}th, the \emph{s}th and the \emph{t}th column, the row can be given by
\begin{center}
	\emph{$c_{i*}$}=(\emph{$c_{ir}$},\emph{$c_{is}$},\emph{$c_{it}$}),
\end{center}
where \emph{$c_{ir}$}=\emph{$a_{ik}$}\emph{$b_{kr}$}, \emph{$c_{is}$}=\emph{$a_{il}$}\emph{$b_{ls}$} and \emph{$c_{it}$}=\emph{$a_{ik}$}\emph{$b_{kt}$}+\emph{$a_{il}$}\emph{$b_{lt}$}.
The final result matrix \emph{C} is shown in Table \ref{tab:tbC}.

SPARQL join operation is essentially the same as the Matrix multiplication, except that the result representation is not the same.
The matching condition when multiplying two matrices is equivalent to the join variable of the join operation of two tables.
As can be seen from Table \ref{tab:tbC}, the matrix multiplication result representation always has the same format ($row$, $col$, $value$). And each element is the cumulative value of multiple partial product results.

However, the result of the join operation is not a fixed format.
Such as, (\emph{i,k}) join (\emph{k,r}) and (\emph{k,t}) are (\emph{i,k,r}) and (\emph{i,k,t}), (\emph{i,l}) join (\emph{l,s}) and (\emph{l,t}) are (\emph{i,l,s}) and (\emph{i,l,t}), (\emph{i,k,t}) and (\emph{i,l,t}) can't accumulate since they do not represent the same semantic result. Because the two tables join will increase the number of columns (\emph{A}.$row$, \emph{A}.$col$, \emph{B}.$col$) to fully represent this semantics.
The join operation of the two tables is converted to multiplication of two matrices, but only the boolean matching is made, and all of the $value$ are 0, so the stored $value$ column can be omitted.
So, the result matrix \emph{C} of \emph{A} join \emph{B} becomes 3 columns.
When it is necessary to continue the join operation, it is only necessary to regard the first column of the result table as the row stored by the sparse matrix, and then perform the matching according to the join condition.

\subsection{Query Planner}\label{sec:planer}
As discussed above, the join of two triple patterns can be transformed into the multiplication of two sparse matrices.
Therefore, a query graph of multiple triple patterns can also be converted into a series of sparse matrix multiplications.

In general, the performance of query evaluation depends on the order of sparse matrix multiplications since different orders bring intermediate results with a different size.

To generate an optimal query plan, we optimize query evaluation by reordering triple patterns of queries.
As we know, the worst case of join is Cartesian product since the maximum of intermediate results (upper bound) will be brought forth. In the reordering of our optimization, we directly disallow the worst case as far as possible with minimal upper bound. In this case, our optimization is aslo called \emph{connected minimal upper bound}.

To do so, we introduce the two-step procedure: \emph{rearranging triple patterns} and \emph{generating query plan}.

\subsubsection{Rearranging Triple Patterns}
The main strategy of rearranging triple patterns is \emph{connected minimal upper bound} with no changed edges of the original query graph.

The procedure of rearranging triple patterns contains the following steps.
\begin{enumerate}
	\item Sorting all $n$ edges according to label (predicate) statistics and then obtaining a list of edges: $L= (e_1, e_2, \ldots, e_n)$, where $e_i \preceq e_j$ for any $1\le i< j\le n$.
	\item Initializing a collection of nodes of edges, \emph{N} = $\emptyset$, and selecting the smallest edge $e_1$ in the collection \emph{L} and adding its nodes $e_1.node$ into the collection \emph{N}. Then updating the list \emph{L}=($e_2$, $\ldots$ , $e_n$) and the collection \emph{N} = \{$e_1.node$\}, where $e_1.node$ are the nodes of $e_1$.
	\item Selecting an edge $e_i$ in order from the updated list \emph{L} and at least one node of $e_i$ must appear in the updated collection \emph{N}. Then adding its nodes $e_i.node$ into the collection \emph{N} and removing selected $e_i$ from the list \emph{L}. This step guarantees the connection: the selected edge and the previous edges have common join variables.
	\item Repeating step 3 until the collection \emph{L} = $\emptyset$.
\end{enumerate}

Taking query $Q_g$ as an example, its optimization steps are shown in the right figure of Figure~\ref{fig:querygraph} .

Based on the above analysis, we can analyze the overall cost by accumulating all intermediate results during the whole of relations joining to evaluate the cost of the query. We analyze the overall cost from the following two aspects.

\underline{1). Order of Join}

Let $R_1$ and $R_2$ be two relations. We use $\delta(R_1, R_2)$ to denote the number of intermediate results of $R_1 \Join R_2$. Generally, we use $\delta(R_1, \ldots, R_n)$ to denote the number of all intermediate results of $R_1\ \Join \ldots  \Join R_n$. Let $|R|$ denote the number of tuples occurring in $R$.
Clearly, we conclude the following.
\begin{prop}\label{prop:2-boundary}
	Let $R_1$ and $R_2$ be two relations. We have $\min(|R_1|, |R_2|) \le \delta(R_1, R_2) \le |R_1| |R_2|$.
\end{prop}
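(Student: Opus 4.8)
The plan is to prove the two inequalities in Proposition~\ref{prop:2-boundary} separately, working directly from the definition of the join (concatenation) $R_1 \Join R_2$ and of $\delta(R_1,R_2) = |R_1 \Join R_2|$.

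First I would handle the upper bound $\delta(R_1,R_2) \le |R_1||R_2|$. Recall that each tuple of $R_1 \Join R_2$ is obtained by pairing a tuple $t_1 \in R_1$ with a tuple $t_2 \in R_2$ that agrees with $t_1$ on all shared variables, and then merging them into a single tuple over the distinct union of the two schemas. The key observation is that this merging map is injective on the set of \emph{compatible} pairs: from the merged tuple one recovers $t_1$ by projecting onto $\mathrm{Sch}(R_1)$ and $t_2$ by projecting onto $\mathrm{Sch}(R_2)$, so distinct compatible pairs yield distinct output tuples. Hence $\delta(R_1,R_2)$ equals the number of compatible pairs, which is at most the number of all pairs $|R_1 \times R_2| = |R_1||R_2|$. (Equality here is exactly the Cartesian-product case, when the schemas share no variable, or more generally when every pair happens to be compatible — which matches the paper's remark that Cartesian product is the worst case.)

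Next I would handle the lower bound $\min(|R_1|,|R_2|) \le \delta(R_1,R_2)$. Assume without loss of generality $|R_1| \le |R_2|$; the claim is then that the join has at least $|R_1|$ tuples. The natural route is to show that the projection of $R_1 \Join R_2$ back onto $\mathrm{Sch}(R_1)$ is surjective onto $R_1$ — i.e., every tuple $t_1 \in R_1$ extends to at least one tuple of the join. This is where the argument needs care, and I expect it to be the main obstacle: in full generality it is simply \emph{false} (if $R_1$ and $R_2$ share a variable and some value of that variable appears in $R_1$ but in no tuple of $R_2$, then that tuple of $R_1$ contributes nothing, and indeed the join can be empty). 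So the lower bound as stated must be read under an implicit hypothesis that the paper is carrying — namely that we only ever join relations arising from triple patterns of a \emph{connected} query in which join variables are genuinely shared and, crucially, that no dangling values occur, or equivalently that the relations are restricted to the active domain so that every $R_1$-tuple has a matching $R_2$-tuple. Under that hypothesis the projection is surjective, each fibre is nonempty, the fibres partition $R_1 \Join R_2$, and summing fibre sizes gives $\delta(R_1,R_2) \ge |R_1| = \min(|R_1|,|R_2|)$.

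So the bulk of the write-up is the injectivity argument for the upper bound (routine once the merge/project maps are set up) and a clean statement of the hypothesis under which the lower bound holds, followed by the surjectivity-of-projection / nonempty-fibre counting. I would present the upper bound first since it is unconditional, then state the lower bound with its standing assumption made explicit, and close by noting that both bounds are tight: the upper bound is met by a Cartesian product and the lower bound is met when, say, the shared variable is a key on the $R_2$ side so each $R_1$-tuple has exactly one match.
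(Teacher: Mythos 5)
Your proposal takes a genuinely different route from the paper's proof, and a more careful one. The paper argues by a three-way case analysis on schema overlap: if $\mathrm{Sch}(R_1)$ and $\mathrm{Sch}(R_2)$ share no variable, the join is the Cartesian product and $\delta(R_1,R_2)=|R_1||R_2|$; if the schemas coincide, the join is $R_1\cap R_2$ and the paper asserts $\delta(R_1,R_2)=\min(|R_1|,|R_2|)$; in every remaining case it asserts the strict inequalities $\min(|R_1|,|R_2|)<\delta(R_1,R_2)<|R_1||R_2|$ ``clearly,'' with no argument. Your injectivity argument for the upper bound is essentially a rigorous, uniform version of what the paper only checks at one extreme: it handles all schema configurations at once, which is a genuine improvement. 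More importantly, your diagnosis of the lower bound is correct and identifies a flaw the paper's proof glosses over: $|R_1\cap R_2|$ is at most $\min(|R_1|,|R_2|)$, not equal to it, and more generally the join of two relations sharing a variable can be empty, so the stated lower bound is false without an additional hypothesis such as the one you articulate (every $R_1$-tuple has at least one compatible $R_2$-tuple). The paper supplies no such hypothesis and its middle and ``otherwise'' cases are simply asserted. In short, your approach buys a correct proof of the upper bound and an honest account of exactly when the lower bound can hold; the paper's case analysis buys only brevity, at the cost of two unjustified (and, as stated, incorrect) equalities.
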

\begin{proof}
	Let $\mathrm{Sch}(R_1) = (?x_1,\ldots,?x_n)$ and $\mathrm{Sch}(R_2) = (?y_1,\ldots,?y_m)$.
	
	Firstly, we discuss the two boundaries as follows:
	
	If $\{?x_1, \ldots,?x_n\} \cap \{?y_1, \ldots,?y_m\} = \emptyset$ then $R_1 \Join R_2 = R_1 \times R_2$ where $\times$ is the Cartesian product of $R_1$ and $R_2$. In this case, $\delta(R_1, R_2) = |R_1| |R_2|$.
	
	If $\{?x_1, \ldots,?x_n\} = \{?y_1, \ldots,?y_m\}$ then $R_1 \Join R_2 = R_1 \cap R_2$.  In this case, $\delta(R_1, R_2) = \min(|R_1|, |R_2|)$.
	
	Otherwise, we have $\min(|R_1|, |R_2|) < \delta(R_1, R_2) < |R_1| |R_2|$ clearly.
\end{proof}
We can generalize Proposition~\ref{prop:2-boundary} for arbitrary multiple relations.
\begin{thm}\label{thm:n-boundary}
	Let $R_1, \ldots, R_n$ be $n$ relations and $(R_1, \ldots, R_n)$ be a list.
	\[
	\min(|R_1|, |R_2|) + \cdots + \min( \min(\ldots\min(|R_1|,|R_2|)\ldots),|R_n|) \le \delta(R_1, \ldots, R_n)
	\]
	\[
	\le |R_1||R_2| + \cdots+|R_1| \cdots |R_n|.
	\]
\end{thm}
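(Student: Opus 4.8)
The plan is to prove Theorem~\ref{thm:n-boundary} by induction on $n$, using Proposition~\ref{prop:2-boundary} as the base case ($n=2$) and as the engine of the inductive step. The key observation is that the quantity $\delta(R_1,\ldots,R_n)$ is, by definition, the accumulated count of intermediate results produced along the left-deep evaluation $((\cdots((R_1\Join R_2)\Join R_3)\cdots)\Join R_n)$, so it decomposes additively: if we write $S_k = R_1\Join\cdots\Join R_k$ for the $k$-th partial result, then $\delta(R_1,\ldots,R_n) = \sum_{k=2}^{n}|S_k|$, i.e. each join step contributes $|S_k| = \delta(S_{k-1},R_k)$ to the running total. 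This is exactly why the two bounds in the theorem are themselves sums of $n-1$ terms.

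First I would set up the notation $S_1=R_1$ and $S_k = S_{k-1}\Join R_k$ for $k=2,\ldots,n$, and record the identity $\delta(R_1,\ldots,R_n)=\sum_{k=2}^{n}|S_k|$ from the definition of the accumulated-intermediate-results measure. Then, for each $k$, I would apply Proposition~\ref{prop:2-boundary} to the pair $(S_{k-1},R_k)$, giving
\[
\min(|S_{k-1}|,|R_k|)\ \le\ |S_k|\ \le\ |S_{k-1}|\,|R_k|.
\]
Summing these $n-1$ inequalities over $k=2,\ldots,n$ immediately yields the theorem, provided I can identify the resulting sums with the two displayed expressions. For the upper side, a trivial induction shows $|S_{k-1}|\le |R_1|\cdots|R_{k-1}|$ (each join is at worst a Cartesian product), so $|S_k|\le |R_1|\cdots|R_k|$, and the sum telescopes into $|R_1||R_2|+\cdots+|R_1|\cdots|R_n|$. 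For the lower side, one checks that $|S_{k-1}|\ge \min(\min(\cdots\min(|R_1|,|R_2|)\cdots),|R_{k-1}|)$ by induction, hence $\min(|S_{k-1}|,|R_k|)\ge \min(\min(\cdots\min(|R_1|,|R_2|)\cdots),|R_k|)$, matching the nested-$\min$ terms in the statement.

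The main obstacle is a presentational one rather than a mathematical one: the nested-minimum expression in the lower bound is stated somewhat loosely, and I would need to pin down precisely what it means — namely the running minimum $\mu_k := \min(\mu_{k-1},|R_k|)$ with $\mu_1=|R_1|$ — and then verify that $|S_k|\ge \mu_k$ is the right invariant to carry through the induction. The subtlety is that $|S_k|\ge\min(|S_{k-1}|,|R_k|)$ alone does not obviously give $|S_k|\ge\mu_k$ unless we also know $|S_{k-1}|\ge\mu_{k-1}$; so the induction hypothesis must bundle this monotonicity fact, and one must be a little careful that $\min(|S_{k-1}|,|R_k|)\ge\min(\mu_{k-1},|R_k|)=\mu_k$ uses only $|S_{k-1}|\ge\mu_{k-1}$, which is exactly the hypothesis. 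A secondary caveat worth a remark is that the bound implicitly fixes the left-deep join order $(R_1,\ldots,R_n)$ — the theorem speaks of ``a list'' — so no reordering is claimed here; the optimization over orders is handled separately by the query planner. Once the running-minimum invariant is stated cleanly, the rest is bookkeeping.
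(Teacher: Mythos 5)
Your proof is correct and follows exactly the route the paper intends: the paper gives no explicit proof of Theorem~\ref{thm:n-boundary}, merely asserting it as a generalization of Proposition~\ref{prop:2-boundary}, and your induction on the left-deep partial joins $S_k$ with the running-minimum invariant $|S_k|\ge\mu_k$ is precisely that generalization, carried out carefully. Your added precision about the additive decomposition $\delta(R_1,\ldots,R_n)=\sum_{k=2}^{n}|S_k|$ and the fixed join order is a welcome clarification of details the paper leaves implicit.
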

By Theorem~\ref{thm:n-boundary}, we can find that $|R_1||R_2| + \cdots+|R_1| \cdots |R_n|$ is a upper bound of all joining intermediate results of $\delta(R_1, \ldots, R_n)$.

Next, we give a minimal upper bound of $\delta(R_1, \ldots, R_n)$ in the following:
\begin{thm}\label{thm:min-boundary}
	Let $R_1, \ldots, R_n$ be $n$ relations. If $|R_1| \le \cdots \le |R_n|$ then for any list $(R_{i_1}, \ldots R_{i_n})$ of $\{R_1, \ldots, R_n\}$, we have
	\[
	\delta(R_1, \ldots, R_n) \le |R_{i_1}||R_{i_2}| + \cdots+|R_{i_1}| \cdots |R_{i_n}|.
	\]
\end{thm}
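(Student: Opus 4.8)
My plan is to combine Theorem~\ref{thm:n-boundary} with a monotonicity argument showing that, among all orderings of the $n$ relations, the sorted order $R_1, \ldots, R_n$ (with $|R_1| \le \cdots \le |R_n|$) makes each summand $|R_{i_1}|\cdots|R_{i_k}|$ as small as possible. First I would fix an arbitrary list $(R_{i_1}, \ldots, R_{i_n})$, which is a permutation of $\{R_1, \ldots, R_n\}$. By Theorem~\ref{thm:n-boundary} applied to this list, $\delta(R_{i_1}, \ldots, R_{i_n}) \le |R_{i_1}||R_{i_2}| + \cdots + |R_{i_1}|\cdots|R_{i_n}|$. Since $\delta(R_1,\ldots,R_n)$ denotes the number of intermediate results of the same join $R_1 \Join \cdots \Join R_n$ — and a set of tuples produced by a join is independent of how we name the operands — I would argue $\delta$ is symmetric in its arguments, so $\delta(R_1,\ldots,R_n) = \delta(R_{i_1},\ldots,R_{i_n})$. (If instead $\delta$ is meant to be order-sensitive, the statement should be read as: the sorted order achieves the bound; either way the core estimate below is what matters.)

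The heart of the argument is the inequality, for every $k \in \{2,\ldots,n\}$,
\[
|R_{i_1}||R_{i_2}|\cdots|R_{i_k}| \ge |R_1||R_2|\cdots|R_k|,
\]
i.e. the product of any $k$ of the cardinalities is at least the product of the $k$ smallest. This is immediate since $\{|R_{i_1}|,\ldots,|R_{i_k}|\}$ is a size-$k$ subset of the multiset $\{|R_1|,\ldots,|R_n|\}$ and all cardinalities are nonnegative integers, so replacing each chosen factor by the corresponding smallest available factor only decreases the product. Summing this over $k = 2, \ldots, n$ gives
\[
|R_{i_1}||R_{i_2}| + \cdots + |R_{i_1}|\cdots|R_{i_n}| \ge |R_1||R_2| + \cdots + |R_1|\cdots|R_n|,
\]
which combined with the displayed Theorem~\ref{thm:n-boundary} bound yields $\delta(R_1,\ldots,R_n) \le |R_{i_1}||R_{i_2}| + \cdots + |R_{i_1}|\cdots|R_{i_n}|$ for the arbitrary list, as claimed.

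I expect the main obstacle to be a subtlety in what "any list" and $\delta$ mean, not the arithmetic. If $\delta(R_1,\ldots,R_n)$ accumulates intermediate results in the specified left-to-right order, then $\delta$ genuinely depends on the ordering, and one must be careful: the theorem is then really the assertion that reordering to the sorted list never hurts, and Theorem~\ref{thm:n-boundary} must be invoked on the sorted list $(R_1,\ldots,R_n)$ while the right-hand side is the upper-bound expression for an arbitrary list. The monotonicity inequality on partial products is exactly what bridges the two, so the proof structure is unchanged; the only care needed is stating clearly which list Theorem~\ref{thm:n-boundary} is applied to. A secondary minor point is the degenerate case where some $|R_i| = 0$, but then both sides collapse gracefully and the inequality still holds.
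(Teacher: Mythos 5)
The paper states Theorem~\ref{thm:min-boundary} (and Theorem~\ref{thm:n-boundary}, on which it rests) without any proof, so there is no argument of the authors' to compare yours against; your proposal supplies the missing justification. Your argument is correct and is the natural one: apply the upper bound of Theorem~\ref{thm:n-boundary} to the sorted list $(R_1,\ldots,R_n)$, and then observe that for each $k$ the prefix product $|R_{i_1}|\cdots|R_{i_k}|$ of an arbitrary permutation dominates $|R_1|\cdots|R_k|$, since the $j$-th smallest element of any $k$-element sub(multi)set is at least the $j$-th smallest element of the whole multiset and all cardinalities are nonnegative; summing over $k=2,\ldots,n$ finishes it. Your caveat about whether $\delta$ is order-sensitive is well taken --- the paper's definition (``the number of all intermediate results of $R_1\Join\cdots\Join R_n$'') is left ambiguous on this point --- but as you note, either reading is covered by your chain of inequalities, so the proof stands. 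The only cosmetic remark is that the termwise domination of prefix products is the entire content of the theorem; it would be worth isolating it as a one-line lemma rather than burying it mid-paragraph.
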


Note that the minimal upper bound defined in Theorem~\ref{thm:min-boundary} concerns mainly the scale of relations.

\underline{2). Avoiding the Cartesian Product}

However, in the actual query process, the Cartesian product cost is very large.
If the Cartesian product is not avoided during the query, the largest intermediate result will be generated, which will produce a lot of meaningless results. And it will increase unnecessary I/O time of the query.
Therefore, we should avoid the cost of Cartesian product.

\subsubsection{Generating Query Plan}
Rearranging triple patterns to get an optimized query. The order in which the queries are rearranged is the order in which the joins are executed, that is, the specific physical query plan.

The query optimization algorithm (Query Plan Generated) is listed in Algorithm \ref{alg:QueryPlan}.
In the first line, sorting all edges list \emph{L} in the query graph according to label (predicate) statistics. In lines 2-4, selecting an edge with the smallest number of statistics $e_1$ as the starting edge, setting its nodes as the initial collection \emph{N} and removing selected edge from list \emph{L}. In lines 5-12, selecting an edge from the updated list \emph{L} in order, which has at least one node that are elements of \emph{N}. If the list \emph{L} is not empty, it means that all edges are not covered, so this step is performed cyclically.

\begin{algorithm}\label{alg:QueryPlan}
	\caption{Query Plan Generated}%
	\LinesNumbered %
	\KwIn{List: \emph{L} = ($e_1$, $e_2$, $\ldots$ , $e_n$), Collection: \emph{N} = $\emptyset$ }%
	\KwOut{List: Optimized Query Plan \emph{Q}}%
	\emph{L.sort} ( ) \; 
	\emph{Q.add} ( $e_1$ ) \;
	\emph{N} = \emph{N} $\cup$ \{ $e_1.node$\}\;
	\emph{L.remove} ( $e_1$ )\;
	
	\While{ L.size > 0}{
		i=1\;
		\eIf{$e_i.node$ $\in$ \emph{N}}{
			\emph{N}=\emph{N} $\cup$ \{ $e_i.node$ \}\;
			Q.add ( $e_i$ )\;
		}{
			i++;
		}
		
		\emph{L}.remove ( $e_i$ )\;
	}
\end{algorithm}
\subsection{Query Execution}\label{sec:exe}

The SPARQL query is transformed into the multiplications of a series of sparse matrices. Therefore, table-based join operations are converted to sparse matrix-based join (SM-based join) that approximates multiplications for sparse matrices.
After obtaining an optimized query, a series of sparse matrices can be obtained based on the predicates of each triple pattern in the query (assuming the predicate is not a variable, which is known).

Next, we illustrate the SM-based join operation through an example.
As for the BGP as Figure \ref{fig:query}, the triple patterns of SPARQL query are optimized and transformed into an optimal query shown in Figure \ref{fig:BGP}.
Then based on the optimized query, the SM-based join operation is performed.

\begin{figure}[H]
	\centering
	\begin{center}
		\fbox{
			\parbox{2.5cm}{
				\hspace*{0.15in} ?X likes   ?W \\		
				\hspace*{0.15in} ?Z likes   ?W \\
				\hspace*{0.15in} ?X follows ?Y \\
				\hspace*{0.15in} ?Y follows ?Z 
			}
		}
	\end{center}
	\caption{BGP Sequence}	\label{fig:BGP}
\end{figure}
\begin{figure}[H]
	\centering
	\includegraphics[width=1\textwidth]{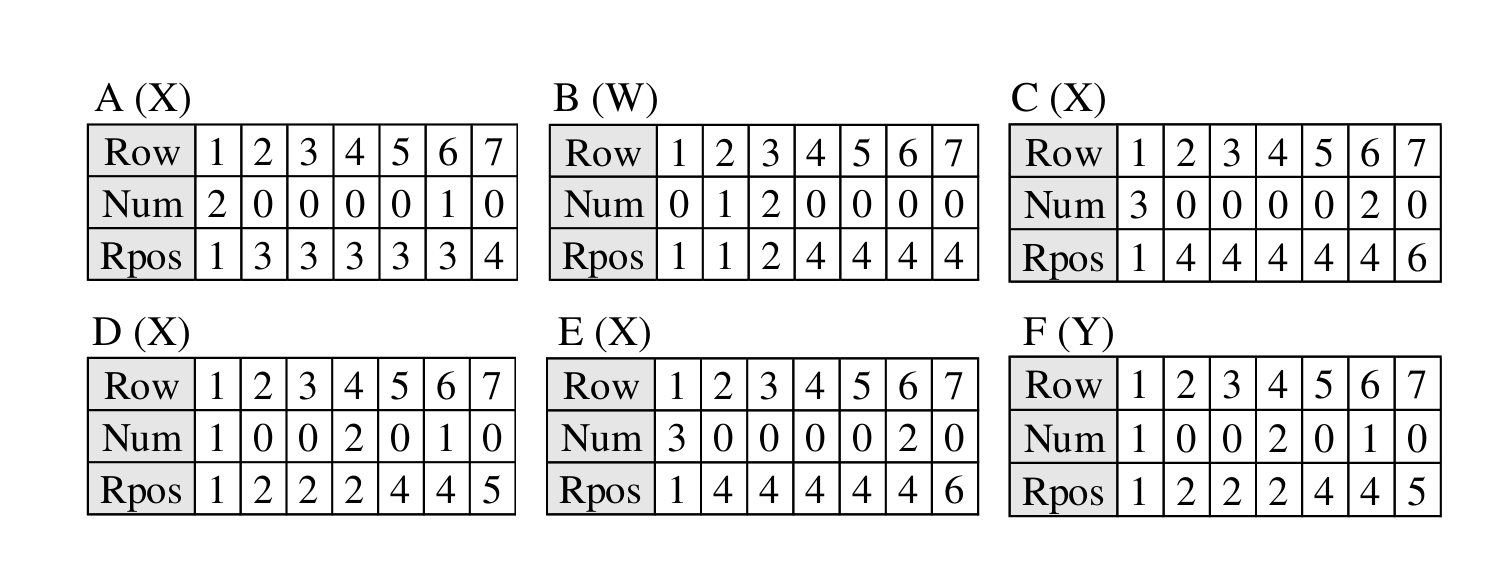}
	\caption{Auxiliary arrays}\label{fig:multiply}
\end{figure}

Figure~\ref{fig:queryprocess} shows the SM-based join processing. Tables $A$, $B$, $D$ and $F$ represent four sparse matrices obtained from the query triple patterns in Figure~\ref{fig:BGP} and the RDF dataset in Figure~\ref{fig:rdf}. And table $C$, $E$ and $G$ represent the intermediate results during the join processing. The first row of each table represents variable bindings of triple pattern and every join is a process of sparse matrix multiplication.

To assist SM-based join, we create an auxiliary array for every sparse matrix.
This is a necessary step for the join operations of these two sparse matrices, $SM_L$ and $SM_{R}$. 
The auxiliary arrays, \emph{$SM_L$}($\alpha$) and $SM_{R}$($\beta$), can quickly index to the start position of each non-zero row, and can count the number of non-zero entries in each row.
Creating auxiliary arrays of $SM_{L}$ and $SM_{R}$:
For $SM_{L}$, we build the auxiliary array according to the row $\alpha$: the non-zero items and the starting offsets of each row are counted.
For $SM_{R}$, we build the auxiliary array according to the row $\beta$ of the first join variable with $SM_{L}$: the non-zero items and the starting offsets of each row are counted. The row of \texttt{Row} represents the row number derived from the sparse matrix, \texttt{Num} represents the number of all non-zero entries per \texttt{Row}, \texttt{Rpos} represents the starting position of the first non-zero entry per \texttt{Row} in the sparse matrix.
As shown in Figure~\ref{fig:multiply}, the auxiliary arrays A(X) and  B(W) of sparse matrices $A$ and $B$ shown in Figure~\ref{fig:queryprocess} are established.

\begin{figure}[H]
	\centering
	\includegraphics[width=1\textwidth]{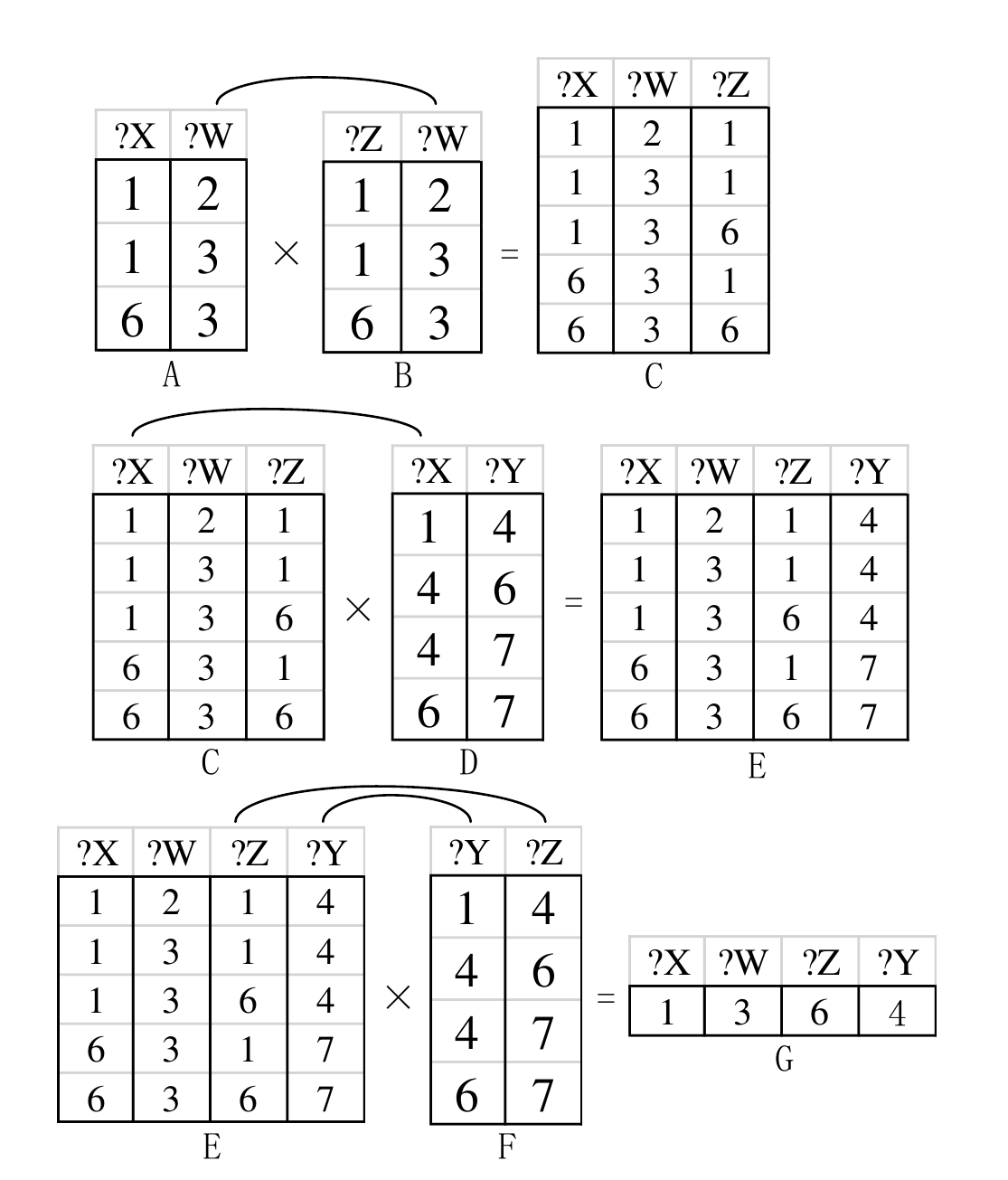}
	\caption{The representation of the query processing}\label{fig:queryprocess}
\end{figure}

We traverse the auxiliary arrays A(X) and B(W) to collect all the elements in the row \texttt{Num} which is large than 0.
Obviously, the columns in A(X) we get are \{1,2,1\} and \{6,1,3\}, which means there exists 2 elements in \texttt{Row} 1 starting from the first position and 1 element in \texttt{Row} 6 starting from the third position in the sparse matrix $A$. Thus, we traverse the first position until the second position to obtain the element that belonged to row 1, i.e., \{1,2\} and \{1,3\} and the third position to obtain the element that belonged to row 6, i.e., \{6,3\} in $A$. Further, based on the value of ?W, the common variable in $A$ and $B$, we index the column 'value' in the row \texttt{Row} in auxiliary array B(W) whose \texttt{NUM} is larger than 0, i.e., existing non-zero entries. For example, based on the 'value' 2 of $column$ ?W of the entry \{1,2\} in $A$, we find out the entry \{2,1,1\} of B(W), which is matched against the first position of $B$, i.e., \{1,2\} to obtain the result \{1,2,1\} in $C$.

The query execution algorithm is listed in Algorithm \ref{alg:QueryExecution}. The algorithm describes the BGP query of SPARQL.
A SPARQL query is a continuous SM-based join operation. For a whole SPARQL query, we only need to loop join operation in a query plan.
In the first line, the first sparse matrix is obtained based on the optimized query plan $Q$.
In lines 2-4, if there are suspended triple patterns in the query plan, it first obtains its sparse matrix and joins the previous intermediate results.
All triple patterns in the query plan $Q$ are executed in sequence until all tasks are executed.

\begin{algorithm}[t]\label{alg:QueryExecution}
	\caption{Query Execution}
	\LinesNumbered 
	\KwIn{Query: \emph{Q}}
	\KwOut{Result table: \emph{Rt}}
	\emph{Rt} $\longleftarrow$ the first sparse matrix of \emph{Q} \;
	\While{\emph{Q}.empty != True}{
		\emph{SpTable} $\longleftarrow$ the next sparse matrix of \emph{Q} \;
		\emph{Rt} = \emph{Join( Rt, SpTable )}\;
	}
\end{algorithm}

The specific SM-based join operations are described in Algorithm~\ref{alg:SMjoin}.
In the first line, we first create respective auxiliary arrays based on two sparse matrices.
In line 4, all non-zero rows of sparse matrix $A$ are traversed according to the auxiliary array of $A$.
In line 5, traversing all non-zero entries of each non-zero row in the sparse matrix $A$ according to the auxiliary array of $A$.
In line 6, for each non-zero item in $A$, finding all non-zero items in $B$ that has at least one common join variable with non-zero item of $A$ through the auxiliary array of $B$.
In lines 7-10, if other join variables are also matched, the result will be outputed. Otherwise, matching the next one.
And the result is (1,3,6,4), it is anti-hashed to the original string (A,I2,C,B). The results of the original dataset are shown in the red part of the Figure~\ref{fig:rdf}.

\begin{algorithm}[t]\label{alg:SMjoin}
	\caption{SM-based Join( \emph{A}, \emph{B} )}
		\LinesNumbered 
	\KwIn{Two sparse table: \emph{A}, \emph{B}}
	\KwOut{Result table: \emph{Rt}}
	\emph{Building auxiliary arrays of A and B} \;
	\emph{NR} $\longleftarrow$ all non-zero rows \\
	\emph{NE} $\longleftarrow$ all non-zero entries of non-zero row\\
	\For{ Nr\_A in NR\_A }{
		
		\For{ Ne\_A in Nr\_A }{
			
			\For{ Ne\_B matched to Ne\_A}{
				
				\eIf{ other join variables are also satisfied}{
					output\;
				}{
					continue\;
				}	
			}
		}	
	}
\end{algorithm}

\section{Extension: Join on GPU}\label{sec:GPU}
In this section, we mainly demonstrate the scalability of gSMat on the GPU. And we just extend the sparse matrix-based join operation to the GPU for implementation with CUDA. It's called gSMat$^+$.

\subsection{GPU Background}

Graphics Processing Units (GPUs) are specialized architectures traditionally designed for game applications.
Recent researches show that they can significantly speed up database query processing.
They are used as co-processors for the CPU.
GPU's performance advantage comes from its high parallelism and high memory bandwidth.
The ``parallel computing'' refers to ``multiple data parallel computing time and a data execution time is the same''. As a result, GPU is more suitable for parallel processing of intensive data.

A general flow for a computation task on the GPU consists of three steps. First, the host code allocates GPU memory for input data and output data, and transfers input data from the host memory to the GPU device memory. Second, the host code starts the kernel on the GPU. The kernel performs the task on the GPU. Third, when the kernel execution is done, the host code transfers results from the GPU device memory to the host memory.
We implemented our SM-based join algorithms using CUDA, Our GPU-based joins can be easily mapped to the CUDA framework.

\subsection{Query Processing on GPU}
Matrix operations are well-suited for computation on the GPU, and matrix multiplication is computationally intensive on the GPU. The parallelization of matrix operations is also suitable for sparse matrix operations, and sparse matrices also save a lot of thread resources because it does not store a large number of extra zero entry.
In this subsection, we first introduce our expansion work based on SM-based join. Then, we detailedly discuss the main step of 
\emph{preallocating result matrix space}.\nop{, because the number of the nonzero entries in the result matrix is unknown in advance.}

\subsubsection{SM-based Join on GPU}
The extension of the SM-based join algorithm on the GPU is actually that it can be implemented on the GPU device in parallel.
Our implementation steps include the following steps:
\begin{enumerate}
	\item \emph{Transferring the required data to GPU.} In this step, the data involved in a given SPARQL query is transferred to the GPU device memory. Marking data that requires multiple multiplication operations avoids duplicate transferring.
	\item \emph{Preallocating result matrix space.} In this step, we can get the number of nonzero entries estimated when two sparse matrices join   and the starting offset to store the non-zero items of each row via Algorithm 4. Then we allocate the memory space required by the result matrix.
	\item \emph{Starting the kernel to perform SM-based join on the GPU.} In this step, each non-zero row is processed by one thread. And the result of each thread calculation is stored in the result matrix according to the starting offset of each row.
	\item \emph{Transferring the result data to host.} In this step, after the join operation is completed, the result matrix is transferred back to the host.
\end{enumerate}

The above is the overall process of extending the SM-based join algorithm on the GPU. It is worth noting that the thread is not allocated to all rows, but only to nonzero rows. In this case, a large number of zero rows will be directly removed during the calculation, saving thread resources.

\subsubsection{Preallocating Result Matrix Space}

In the matrix-matrix multiplication, multiplication of matrices pre-allocates a predictable size matrix and store entries to predictable memory addresses. However, the result matrix of multiplication of sparse matrices simply stores non-zero entries. Because the number of the nonzero entries in the result sparse matrix is unknown in advance, precise memory allocation of the sparse matrix multiplication is impossible before real computation. And physical address of each new result entry is also unknown.
Therefore, in order to estimate the space required by the result matrix as accurately as possible, we pre-calculate the size of the possible non-zero entries of the result matrix before the main code allocates GPU memory. This step is also called \emph{minimize the maximum space required}.

\emph{Minimize the maximum space required}, that is, the number of non-zero items can be generated by the result matrix as accurately as possible. We count the number of nonzero entries per row, and record the starting offset of nonzero entries in each row in the assigned result matrix so that nonzero entries can be written to the result matrix in parallel.

Algorithm~\ref{alg:preallocation} describes this procedure, which calculates the upper bound of the nonzero entries in each row of the result matrix and records the starting offset of non-zero entries of each row in the assigned result matrix.
We create two arrays $N$ and $P$ of size m to record the upper bound sizes of the rows and the starting offset, where m is the number of rows of result matrix. We first use the GPU to calculate each item of the array $N$ in parallel and then calculate the starting offset of the final result per row based on the number of rows.
Experimental results show that this method saves a lot of global memory space.
\begin{algorithm}[t]\label{alg:preallocation}
	\caption{Pre-allocating result matrix space}
	\LinesNumbered 
	\KwIn{sparse matrix \emph{A, B}}
	\KwOut{array $N$, $P$}
	
	\For{ $N_i$ in $N$ in parallel}{
		$N_i$ = 0 \;
		\For{ each nonzero entry $a_{ij}$ in $a_{i*}$}{
			nnz($b_{j*}$) $\longleftarrow$ number of nonzero $b_{j*}$ joined with $a_{ij}$\;
			$N_i$ = $N_i$ + nnz($b_{j*}$)\;
		}	
	}
	\For{$P_i$ in $P$}{
		$P_i$ = 0 \;
		$P_i$ = $N_{i-1}$ +$P_{i-1}$ \;
	}
\end{algorithm}

\subsection{Optimization on GPU}
Extending on the GPU, there are many factors that affect the efficiency of the implementation. For example, efficient design of software-side parallel algorithms and resource utilization in hardware. We mainly discuss the impact of the following aspects on efficiency.

\emph{\textbf{Data transfer analysis}}

Through experiments we found that the amount of data transferring is the main source of overhead for GPU-based algorithms.
A SPARQL query consists of multiple SM-based join operations, and the same sparse matrix may also be joined multiple times.
When there are multiple identical predicate of triple patterns in a SPARQL query, indicating that the data needs to be transmitted multiple times, the cost of repeatedly transferring data multiple times is also very expensive.
So, our optimization method is to mark data that requires multiple transfer and then transmit the same data only once.

\emph{\textbf{Shared memory and global memory}}

The device memory of the GPU is also limited. When the processed data is very large, the data cannot be transferred to the device memory all at once. This requires single-step transmission or block transmission during the calculation process.
We try to allocate the global memory as large as possible to store the needed data, such that the number of transferring is reduced. At the same time, shared memory can be used for small results since the shared memory size is also limited.
Although shared memory is a few orders of magnitude smaller than global memory, shared memory accesses faster than global memory.

\emph{\textbf{Other costs}}

In addition to the above optimizations, there are unavoidable costs as follows.\\
\emph{Preheating of GPU:} there is large startup time (preheating overhead) for the first time when the GPU is called by kernel function.\\
\emph{Kernel startup cost:} each Kernel function starts with time, that is, Kernel startup overhead.

\section{Evaluation}\label{sec:ex}
In this section we evaluate gSMat and gSMat$^+$ against some existing popular RDF stores using the known RDF benchmark datasets. We choose RDF-3X, gStore for evaluation, since they show much better performance than other RDF stores.

\subsection{Experimental Setup}
The experiments are conducted on three RDF benchmarks. The synthetic data sets come from well-established Semantic Web benchmarks: the Waterloo SPARQL Diversity Test Suite (WatDiv) and the real world data sets correspond to open source YAGO and DBpedia.
We use five different sizes of synthetic data , which are 100 million, 200 million, 300 million, 400 million and 500 million RDF triples generated by the WatDiv Data Generator with scale factors 1000, 2000, 3000, 4000 and 5000, respectively. It is provided by WatDiv that covers all different query shapes and thus allows us to test the performance of gSMat and gSMat$^+$.
YAGO is a real RDF dataset which consists of facts extracted from Wikipedia (exploiting the infoboxes and category system of Wikipedia) and integrated with the WordNet thesaurus. The YAGO dataset contains 200,737,655 distinct triples and 38,734,252 distinct strings, consuming 14 GB as (factorized) triple dump.
DBPedia is another real RDF set, which constitutes the most important knowledge base for the Semantic Web community. Most triples in DBpedia comes from the Wikipedia Infobox.
The data characteristics are summarized in Table \ref{tab:dataattributes}.
\begin{table}[!htbp]
	
	\caption{Benchmark Statistics} \label{tab:dataattributes}
	\begin{tabular}{|c |c |c |c |}
		\hline
		\#Dataset	& \#Triples & \#(S $\cap$ O)	& \#P \\
		\hline
		WatDiv100M	& 108 997 714	& 10 250 947	& 86 \\
		\hline
		WatDiv200M	& 219 783 842	& 20 296 483	& 86 \\
		\hline
		WatDiv300M	& 329 827 477	& 30 221 812	& 86 \\
		\hline
		WatDiv400M	& 439 433 765	& 40 040 420	& 86 \\
		\hline
		WatDiv500M	& 549 246 141	& 49 771 433	& 86 \\
		\hline
		Yago	& 200 737 655	& 38 734 252	& 46 \\
		\hline
		DBpedia	& 120 978 080	& 42 966 066	& 4 282 \\
		\hline
		
	\end{tabular}
\end{table}

gSMat is implemented in C and compiled by using GCC. The SM-based join algorithm on GPU is implemented in CUDA C and compiled by using NVCC. The experimental environment is shown in Table \ref{tab:Experimentalenvironment}.

\begin{table}[!htbp]
	\caption{Experimental Environment}\label{tab:Experimentalenvironment}
	{\small 	\begin{tabular}{l |l }
			\hline
			Configuration & Machine \\
			\hline
			CPU & Intel(R) Xeon(R) E5-2603 v4 @ 1.70GHz \\
			\hline
			System memory & 72GB \\
			\hline
			GPU & NVIDIA Tesla M40(3072 CUDA Cores,1.11GHz) \\
			\hline
			GPU memory & 24GB \\
			\hline
			Syatem software & Ubuntu 14.04.5 LTS, GPU driver version 367.48,\\
			and library & CUDA Driver Version / Runtime Version 8.0/8.0 \\
			\hline
			
	\end{tabular}}
\end{table}

\subsection{Experiments on Synthetic Datasets}
In this subsection, using WatDiv, we test the performances of our method in two aspects: the efficiency and the scalability.
Here, we generate the query workloads from the respective RDF datasets, which are available as RDF triplesets.
The WatDiv benchmark defines 20 query templates classified into four categories: linear (L), star (S), snowflake (F) and complex queries (C).
In order to avoid the effect of caches on experimental results, we drop the caches before each execution.
Query times are averaged over 10 consecutive runs. All results are rounded to 1 decimal place.
gStore cannot handle datasets of more than 300 million triples in our environment, so WatDiv400M and WatDiv500M over gStore cannot be included in the results of the experiments.

\subsubsection{Efficiency Test}

First, we analyze the query efficiency according to the average query time of the four types. Due to page limit, we only show partial results (WatDiv100M, WatDiv300M and WatDiv500M) on different query engines as shown in Tables~\ref{tab:wat100}, \ref{tab:wat300} and \ref{tab:wat500} respectively.

\begin{table}[!htbp]
	\scriptsize
	\centering
	\caption{Runtime for WatDiv100M queries (ms)}
	\label{tab:wat100}
	\resizebox{\textwidth}{!}{%
		\begin{tabular}{|l|l|l|l|l|}
			\hline
			Wat100 & C       & F     & L   & S   \\
			\hline
			gSMat$^+$ & 3881.3 & 2950.6 & 1313.2  & 1812.6 \\
			\hline
			gSMat & 6678.1 & 3784.1 & 1932.5  & 2503.3 \\
			\hline	
			RDF-3X   & 11980.3 & 8405.6 & 16282.1  & 3820.6 \\
			\hline
			gStore   & 15447.0  & 29204.8 & 20128.0 & 10808.7 \\
			\hline
		\end{tabular}
	}
\end{table}

\begin{table}[!htbp]
	\scriptsize
	\centering
	\caption{Speedup for WatDiv100M queries}
	\label{tab:speedupwat100}
	\resizebox{\textwidth}{!}{%
		\begin{tabular}{|l|l|l|l|l|}	
			\hline
			Wat100 & C       & F     & L   & S   \\	
			\hline
			gSMat/RDF-3X   & 1.79 & 2.22 & 8.43  & 1.53 \\
			\hline
			gSMat$^+$/RDF-3X   & 3.09 & 2.85 & 12.40  & 2.11 \\
			\hline
			gSMat/gStore   & 2.31 & 7.72 & 10.42  & 4.32 \\
			\hline
			gSMat$^+$/gStore   & 3.98 & 9.90 & 15.33  & 5.96 \\
			\hline
		\end{tabular}
	}
\end{table}

\begin{table}[!htbp]
	\centering
	\caption{Runtime for WatDiv300M queries (ms)}
	\label{tab:wat300}
	\scriptsize
	\resizebox{\textwidth}{!}{%
		\begin{tabular}{|l|l|l|l|l|}
			\hline
			Wat300 & C       & F     & L   & S   \\
			\hline
			gSMat$^+$ & 10597.6 & 8136.3 & 3657.7  & 4794.2 \\
			\hline
			gSMat & 18106.2 & 9185.3 & 5033.3  & 5350.3 \\
			\hline
			RDF-3X   & 39571.1 & 26231.3 & 56879.2  & 16066.5 \\
			\hline
			gStore   & 111490.8  & 1047971.5 & 311390.4 & 296687.5 \\
			\hline
		\end{tabular}
	}
\end{table}		
\begin{table}[!htbp]
	\centering
	\caption{Speedup for WatDiv300M queries}
	\label{tab:speedupwat300}
	\scriptsize
	\resizebox{\textwidth}{!}{%
		\begin{tabular}{|l|l|l|l|l|}
			\hline
			Wat300 & C       & F     & L   & S   \\		
			\hline
			gSMat/RDF-3X   & 2.19 & 2.86 & 11.30  & 3.00 \\
			\hline
			gSMat$^+$/RDF-3X   & 3.73 & 3.22 & 15.55  & 3.35 \\
			\hline
			gSMat/gStore   & 6.16 & 114.09 & 61.87  & 55.45 \\
			\hline
			gSMat$^+$/gStore   & 10.52 & 128.80 & 85.13  & 61.88 \\
			\hline
		\end{tabular}
	}
\end{table}

\begin{table}[!htbp]
	\centering
	\caption{Runtime for WatDiv500M queries (ms)}
	\label{tab:wat500}
	\scriptsize
	\resizebox{\textwidth}{!}{%
		\begin{tabular}{|l|l|l|l|l|}
			\hline
			Wat500 & C       & F     & L   & S   \\
			\hline
			gSMat$^+$ & 17375.5 & 13958.5 & 5746.0  & 7769.5 \\
			\hline
			gSMat & 35509.1 & 16071.4 & 9719.6  & 9072.4 \\	
			\hline
			RDF-3X   & 66513.7 & 46906.9 & 92685.6  & 27456.7 \\
			\hline
		\end{tabular}
	}
\end{table}

\begin{table}[!htbp]
	\centering
	\caption{Speedup for WatDiv500M queries }
	\label{tab:speedupwat500}
	\scriptsize
	\resizebox{\textwidth}{!}{%
		\begin{tabular}{|l|l|l|l|l|}
			\hline
			Wat500 & C       & F     & L   & S   \\		
			\hline
			gSMat/RDF-3X   & 1.87 & 2.92 & 9.54  & 3.03 \\
			\hline
			gSMat$^+$/RDF-3X   & 3.83 & 3.36 & 16.13  & 3.53 \\
			\hline
		\end{tabular}
	}
\end{table}

Table~\ref{tab:speedupwat100}, \ref{tab:speedupwat300} and \ref{tab:speedupwat500} show speedup of RDF-3X and gStore over WatDiv datasets where
gSMat/RDF-3X and gSMat$^+$/RDF-3X represent the speedup of gSMat and gSMat$^+$ with respect to RDF-3X while gSMat/gStore and gSMat$^+$/gStore represent the speedup of gSMat and gSMat$^+$ with respect to gStore, respectively.

The first observation is that gSMat performs much better than RDF-3X and gStore for all queries. And the performance of gSMat$^+$ is better than gSMat after parallel extension on GPU.
We first analyze and discuss the comparison with RDF-3X. The most obvious acceleration effect is the L type queries, up to 11 times for gSMat and up to 16.13 times for gSMat$^+$.
For complex type queries, the speedup is generally stable at about 2 times for gSMat and 3.7 times for gSMat$^+$.
For other type queries (star and snowFlake), the speedup is generally stable at about 3 times for gSMat and gSMat$^+$.
Next we discuss the comparison with gStore. Due to our environmental constraints, we can only analyze gStore with triple size up to 300 million. From the experimental result, we can see that gSMat$^+$ and gSMat have a great improvement for different type queries of data in gStore.
In our environment, when the data is a maximum of 300 million, the acceleration ratios of snowflake, linear, star, and complex types are 114.09, 61.87, 55.45, 6.16 for gSMat and 128.80, 85.13, 61.88, 10.52 for gSMat$^+$, respectively.

\begin{figure}[H]
	\includegraphics[width=0.6\textwidth]{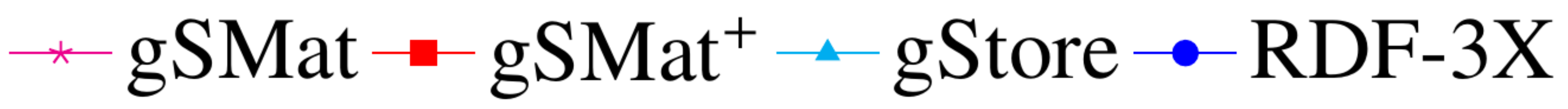}
	\subfigure[F]{\label{fig:wat-F}
		\begin{minipage}[htbp]{0.45\linewidth}
			\centering
			\scalebox{0.45}{
				\begin{tikzpicture}[font=\Large]
				\begin{semilogyaxis}[
				symbolic x coords={100M,200M,300M,400M,500M},
				bar width=10pt,
				xlabel={WatDiv Datasets},
				ymin=0,
				legend style={at={(0.35,0.5)}},
				ymajorgrids=true,
				grid style=dashed,
				anchor=north,
				legend pos = north west,
				ymin=2900,ymax=12000000,
				xlabel style={below=-0.02cm},
				legend columns=1, 
				ylabel near ticks,
				]
				
				\addplot[
				color=magenta,
				mark=star, mark options={fill=magenta}
				]
				coordinates {
					(100M,3784.1) (200M,6145.5) (300M,9185.3) (400M,12240.3) (500M,16071.4)
				};
				
				\addplot[
				color=red,
				mark=square*, mark options={fill=red}
				]
				coordinates {
					(100M,2950.6) (200M,5684.9) (300M,8136.3) (400M,10850.0) (500M,13958.5)
				};
				
				\addplot[
				color=cyan,
				mark=triangle*, mark options={fill=cyan}
				]
				coordinates {
					(100M,29204.8) (200M,347424.1) (300M,1047971.5)
				};
				
				\addplot[
				color=blue,
				mark=*, mark options={fill=blue}
				]
				coordinates {
					(100M,8405.6) (200M,18681.2) (300M,26231.3) (400M,35474.7) (500M,46906.9)
				};
				\end{semilogyaxis}
				\end{tikzpicture}
			}
			\label{fig:queryWatdiv-F}
		\end{minipage}
	}
	\subfigure[L]{\label{fig:wat-L}
		\begin{minipage}[htbp]{0.45\linewidth}
			\centering
			\scalebox{0.45}{
				\begin{tikzpicture}[font=\Large]
				\begin{semilogyaxis}[
				symbolic x coords={100M,200M,300M,400M,500M},
				bar width=10pt,
				xlabel={WatDiv Datasets},
				ymin=0,
				legend style={at={(0.35,0.5)}},
				ymajorgrids=true,
				grid style=dashed,
				anchor=north,
				legend pos = north west,
				ymin=1000,ymax=500000,
				xlabel style={below=-0.02cm},
				legend columns=2, 
				ylabel near ticks,
				]
				
				\addplot[
				color=magenta,
				mark=star, mark options={fill=magenta}
				]
				coordinates {
					(100M,1932.5) (200M,3224.2) (300M,5033.3) (400M,6379.2) (500M,9719.6)
				};
				
				\addplot[
				color=red,
				mark=square*, mark options={fill=red}
				]
				coordinates {
					(100M,1313.2) (200M,2373.8) (300M,3657.7) (400M,4552.8) (500M,5746.0)
				};
				
				\addplot[
				color=cyan,
				mark=triangle*, mark options={fill=cyan}
				]
				coordinates {
					(100M,20128.0) (200M,75573.2) (300M,311390.4)
				};
				
				\addplot[
				color=blue,
				mark=*, mark options={fill=blue}
				]
				coordinates {
					(100M,16282.1) (200M,36088.7) (300M,56879.2) (400M,69532.5) (500M,92685.6)
				};
				\end{semilogyaxis}
				\end{tikzpicture}
			}
			\label{fig:queryWatdiv-L}
		\end{minipage}
	}
	
	\subfigure[S]{\label{fig:wat-S}
		\begin{minipage}[htbp]{0.45\linewidth}
			\centering
			\scalebox{0.45}{
				\begin{tikzpicture}[font=\Large]
				\begin{semilogyaxis}[
				symbolic x coords={100M,200M,300M,400M,500M},
				bar width=10pt,
				xlabel={WatDiv Datasets},
				ymin=0,
				legend style={at={(0.35,0.5)}},
				ymajorgrids=true,
				grid style=dashed,
				anchor=north,
				legend pos = north west,
				ymin=1800,ymax=500000,
				xlabel style={below=-0.02cm},
				legend columns=1, 
				ylabel near ticks,
				]
				
				\addplot[
				color=magenta,
				mark=star, mark options={fill=magenta}
				]
				coordinates {
					(100M,2503.3) (200M,3568.5) (300M,5350.3) (400M,7108.0) (500M,9072.4)
				};
				
				\addplot[
				color=red,
				mark=square*, mark options={fill=red}
				]
				coordinates {
					(100M,1812.6) (200M,3265.7) (300M,4794.2) (400M,6265.6) (500M,7769.5)
				};
				
				\addplot[
				color=cyan,
				mark=triangle*, mark options={fill=cyan}
				]
				coordinates {
					(100M,10808.7) (200M,62014.8) (300M,296687.5)
				};
				
				\addplot[
				color=blue,
				mark=*, mark options={fill=blue}
				]
				coordinates {
					(100M,3820.6) (200M,10693.4) (300M,16066.5) (400M,20593.2) (500M,27456.7)
				};
				\end{semilogyaxis}
				\end{tikzpicture}
			}
			\label{fig:queryWatdiv-S}
		\end{minipage}
	}
	\subfigure[C]{\label{fig:wat-C}
		\begin{minipage}[htbp]{0.45\linewidth}
			\centering
			\scalebox{0.45}{
				\begin{tikzpicture}[font=\Large]
				\begin{semilogyaxis}[
				symbolic x coords={100M,200M,300M,400M,500M},
				bar width=10pt,
				xlabel={WatDiv Datasets},
				ymin=0,
				legend style={at={(0.35,0.5)}},
				ymajorgrids=true,
				grid style=dashed,
				anchor=north,
				legend pos = north west,
				ymin=3300,ymax=500000,
				xlabel style={below=-0.02cm},
				legend columns=1, 
				ylabel near ticks,
				]
				
				\addplot[
				color=magenta,
				mark=star, mark options={fill=magenta}
				]
				coordinates {
					(100M,6678.1) (200M,11671.5) (300M,18106.2) (400M,23384.5) (500M,35509.1)
				};
				
				\addplot[
				color=red,
				mark=square*, mark options={fill=red}
				]
				coordinates {
					(100M,3881.3) (200M,7189.8) (300M,10597.6) (400M,14044.4) (500M,17375.5)
				};
				
				\addplot[
				color=cyan,
				mark=triangle*, mark options={fill=cyan}
				]
				coordinates {
					(100M,15447.0) (200M,51204.8) (300M,111490.8)
				};
				
				\addplot[
				color=blue,
				mark=*, mark options={fill=blue}
				]
				coordinates {
					(100M,11980.3) (200M,27237.8) (300M,39571.1) (400M,52600.5) (500M,66513.7)
				};
				\end{semilogyaxis}
				\end{tikzpicture}
			}
			\label{fig:queryWatdiv-C}
		\end{minipage}
	}
	\vspace*{-0.6cm}
	\caption{Average query runtime for WatDiv datasets (ms)} \label{fig:queryWatdiv}
\end{figure}

\subsubsection{Scalability Test}

In this experiment, we study how the performances scale with the size of data.
A comparison graph of the average query time for different types of WatDiv data on different query engines (gSMat$^+$, gSMat, gStore, RDF-3X) is shown in Figure \ref{fig:queryWatdiv}. The cyan lines in the figure represent the time trends of the gStore query, blue is RDF-3X, magenta is gSMat and red is gSMat$^+$.
As we can see, query results are analyzed on the same data but on different data scales.
Figures \ref{fig:queryWatdiv-F}, \ref{fig:queryWatdiv-L}, \ref{fig:queryWatdiv-S} and \ref{fig:queryWatdiv-C} show the trend of the query time of the F-type, L-type, S-type and C-type of watdiv data in different data scale.
With the continuous increase in the scale of data, various types of queries have also increased in time, and they have shown steady and gradual growth.
Obviously, it can be seen that gSMat's query timeline is lower than that of RDF-3X and gStore.
The gSMat$^+$ extended on the GPU is the best for query performance. Under any type of different data scale, the query time is always at the bottom.
Also, as the data size increases, the growth rate of the query time of gSMat and gSMat$^+$ is also the slowest.
From the graph, we can also analyze that the query time of gStore increases rapidly for each additional 100M of data, which is very easy to explain the data limit problem of gStore. Because the system memory it needed is very large.
And we can also get the fact that gSMat$^+$, which uses a GPU-accelerated join algorithm, performs better than gSMat when dealing with complex types.

\subsection{Experiments on Real Datasets}
\begin{table*}[!htbp]
	\caption{Runtime for YAGO queries(ms)}
	\label{tab:YAGO}
	\scriptsize
	\resizebox{\textwidth}{!}{%
		\begin{tabular}{|l|l|l|l|l|l|l|l|}
			\hline
			YAGO & Q1       & Q2     & Q3   & Q4   &  Q5  & Q6   &  Q7   \\
			\hline
			gSMat$^+$ & 10946.3 &	11598.5 &	15500.8 &	25713.9 &	24228.5 &	9923.3 &	21872.1  \\
			\hline
			gSMat & 18631.2 &	21116.0 &	18821.5 &	47026.4 &	40423.7 &	16895.3 &	48404.4  \\
			\hline
			RDF-3X   & 46885.3 &	615065.0 &	631522.0 &	208478.7 &	122193.7 &	65361.7 &	363859.7 \\
			\hline
			gStore   & 88371.0 &	327651.7 &	113275.7 &	898263.0 &	233928.7 &	783207.7 &	1342806.0  \\
			\hline
		\end{tabular}
	}
\end{table*}

\begin{table}[!htbp]
	\caption{Speedup for YAGO queries}
	\label{tab:speedupYAGO}
	\scriptsize
	\resizebox{\textwidth}{!}{%
		\begin{tabular}{|l|l|l|l|l|l|l|l|}
			\hline
			YAGO & Q1       & Q2     & Q3   & Q4   &  Q5  & Q6   &  Q7   \\		
			\hline
			gSMat/RDF-3X   & 2.5 &	29.1 &	33.6 &	4.4 &	3.0 &	3.9 &	7.5 \\
			\hline
			gSMat$^+$/RDF-3X   & 4.3 &	53.0 &	40.7 &	8.1 &	5.0 &	6.6 &	16.6  \\
			\hline
			gSMat/gStore   & 4.7 &	15.5 &	6.0 &	19.1 &	5.8 &	46.4 &	27.7 \\
			\hline
			gSMat$^+$/gStore   & 8.1 &	28.2 &	7.3 &	34.9 &	9.7 &	78.9 &	61.4 \\
			\hline
		\end{tabular}
	}
\end{table}

In this experiment, we test the performances of our method using the real datasets, YAGO and DBpedia.
Figure \ref{fig:queryrealdata} describes a comparison among the query runtime of real datasets YAGO and DBpedia on different engines (gSMat$^+$, gSMat, gStore, RDF-3X). The cyan lines in the figure represent the time trends of the gStore query, blue lines represent RDF-3X, magenta lines represent gSMat and red lines represent gSMat$^+$.
Next, we discuss query execution performance on both datasets separately.

First we analyze the comparison on YAGO data as shown in Figure~\ref{fig:YAGO}.
Due to YAGO does not provide benchmark queries, we design seven queries for YAGO real datasets. And it covers the four query types mentioned above.
As can be seen, the minimum query time is the extended gSMat$^+$ on GPU. This fully shows that our sparse matrix-based SPARQL join method has a very good scalability.
Even without GPUs, gSMat's query performance is better than gStore and RDF-3X.
The specific experimental results are shown in Table~\ref{tab:YAGO}~\ref{tab:speedupYAGO}.
As can be seen, compared with RDF-3X, the most obvious acceleration effect for gSMat is the Q3 (F-type), up to 33.6 times and for gSMat$^+$ is Q2 (C-type), up to 53 times. Overall, Q2 and Q3 are the most efficient comparisons of RDF-3X.
And compared with gStore, the  speedup for gSMat is the Q6 (S-type), up to 46.4 times and for gSMat$^+$ is also the Q6, up to 78.9 times. Overall, Q6 and Q7 are the most efficient comparisons of gStore.

Then we analyze the comparison on DBpedia data as shown in Figure~\ref{fig:DBpedia}.
For DBpedia dataset, we design four queries.
Analysis of query efficiency on DBpedia data is similar to YAGO.
As can be seen from the figure, the time of gSMat is smaller than gStore and RDF-3X.
For extended gSMat$^+$ on GPU, the query consumption time is reduced again.

Overall, for query efficiency analysis on real data, gSMat is significantly more efficient than RDF-3X and gStore.
Additional extensions on GPU, gSMat$^+$ improves performance again.

\begin{figure}[t]
	\subfigure[YAGO dataset]{\label{fig:yagoQuery}
		\begin{minipage}[htbp]{0.45\linewidth}
			\centering
			\scalebox{0.45}{
				\begin{tikzpicture}[font=\Large]
				\begin{semilogyaxis}[
				symbolic x coords={Q1,Q2,Q3,Q4,Q5,Q6,Q7},
				enlarge x limits=0.1,
				xtick = data,
				ybar,
				ymin=9500,ymax=5000000,
				legend style={at={(0.68, 0.98)},
					legend columns=2},
				bar width=2pt,
				ylabel={Query Runtimes (ms)},
				xlabel={Queries},
				ylabel near ticks,	
				]

				\addplot
				[color=magenta, fill=magenta]
				coordinates {
					(Q1,18631.2) (Q2,21116.0) (Q3,18821.5) (Q4,47026.4) (Q5,40423.7) (Q6,16895.3) (Q7,48404.4)
				};

				\addplot
				[color=red, fill=red]
				coordinates {
					(Q1,10946.3) (Q2,11598.5) (Q3,15500.8) (Q4,25713.9) (Q5,24228.5) (Q6,9923.3) (Q7,21872.1)
				};

				\addplot
				[color=cyan, fill=cyan]
				coordinates {
					(Q1,88371.0) (Q2,327651.7) (Q3,113275.7) (Q4,898263.0) (Q5,233928.7) (Q6,783207.7) (Q7,1342806.0)
				};
				
				\addplot
				[color=blue, fill=blue]
				coordinates {
					(Q1,46885.3) (Q2,615065.0) (Q3,631522.0) (Q4,208478.7) (Q5,122193.7) (Q6,65361.7) (Q7,363859.7)
				};
				
				\legend{gSMat,gSMat$^+$,gStore,RDF-3X}
				\end{semilogyaxis}
				\end{tikzpicture}
			}
			\label{fig:YAGO}
		\end{minipage}
	}
	\subfigure[DBpedia dataset]{\label{fig:DBpediaQuery}
		\begin{minipage}[htbp]{0.45\linewidth}
			\centering
			
			\scalebox{0.45}{
				\begin{tikzpicture}[font=\Large]
				\begin{semilogyaxis}[
				symbolic x coords={Q1,Q2,Q3,Q4},
				enlarge x limits=0.1,
				xtick = data,
				ybar,
				ymin=2000,ymax=4000000,
				legend style={at={(0.68, 0.98)},
					legend columns=2},
				bar width=2pt,
				ylabel={Query Runtimes (ms)},
				xlabel={Queries},
				ylabel near ticks,
				]
				
				\addplot
				[color=magenta, fill=magenta]
				coordinates {
					(Q1,10753.9) (Q2,9922.5) (Q3,3379.3) (Q4,2054.8)
				};
				
				\addplot
				[color=red, fill=red]
				coordinates {
					(Q1,9428.4) (Q2,5462.1) (Q3,2583.2) (Q4,2126.8)
				};
				
				\addplot
				[color=cyan, fill=cyan]
				coordinates {
					(Q1,94380.3) (Q2,744875.0) (Q3,576199.7) (Q4,3636.7)
				};
				
				\addplot
				[color=blue, fill=blue]
				coordinates {
					(Q1,12178.7) (Q2,212971.3) (Q3,37399.3) (Q4,6330.7)
				};
				\legend{gSMat,gSMat$^+$,gStore,RDF-3X}
				\end{semilogyaxis}
				\end{tikzpicture}
			}
			\label{fig:DBpedia}
		\end{minipage}
	}
	\vspace*{-0.5cm}
	\caption{Average query runtime for YAGO and DBpedia (ms)}  \label{fig:queryrealdata}
\end{figure}
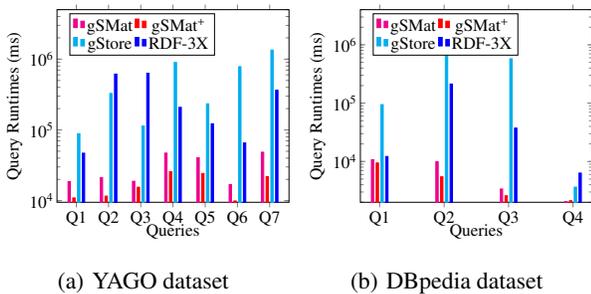

\section{Conclusions and future works}\label{sec:conclusion}
In this paper, we develop a sparse matrix-based SPARQL join algorithm and extend it in parallel using CUDA on GPU. And we propose a sparse matrix-based storage for exactly characterizing the sparsity of RDF graph data. It provides a compact and efficient method for storing RDF graph data physically and also supports high-performance algorithm of query execution. In addition, we analyze the overall cost by accumulating all intermediate results and then design a query plan generated algorithm.
We have analyzed and evaluated our proposal over both real and benchmark RDF datasets with half billion triples and demonstrated the high performance and scalability of our methods compared with the state-of-the-art RDF stores, i.e., RDF-3X and gStore. Our work on gSMat development continues along three dimensions. In the future, it is interesting to extend our query system to support more SPARQL basic operations, such as OPT, UNION, FILTER.
Moreover, it is valuable to further optimize gSMat$^+$ to take it more advantages in parallel query execution on GPU. Besides, we are going to bulid a distributed computing architecture to extend gSMat to support larger data queries.

\section*{Acknowledgments}
This work is supported by the National Key Research and Development Program of China (2016YFB1000603) and the National Natural Science Foundation of China (61672377,61502336).

\balance

%

%

\end{document}